\documentclass[10pt,a4paper]{article}
\usepackage[english]{babel}
\usepackage{amsmath,amssymb,amsthm,bm}
\usepackage[textwidth=170mm,textheight=240mm,centering]{geometry}
\usepackage{setspace}
\usepackage{graphicx,caption}
\usepackage[hidelinks]{hyperref}
\newtheorem{proposition}{Proposition}
\newtheorem{theorem}{Theorem}
\newtheorem{corollary}{Corollary}
\newtheorem{lemma}{Lemma}

\renewcommand\Re{\operatorname{Re}}
\renewcommand\Im{\operatorname{Im}}
\hypersetup{pdftitle={Semiclassical dynamics of higher-order squeezing: self-adjoint extensions and coherent returns},pdfauthor={Alexander E. Teretenkov},pdfkeywords={higher-order squeezing, self-adjoint extensions, Liouville operators, semiclassical limit, Husimi measures, boundary scattering}}
\title{Semiclassical dynamics of higher-order squeezing: self-adjoint extensions and coherent returns}
\author{Alexander E. Teretenkov\thanks{E-mail: \href{mailto:taemsu@mail.ru}{taemsu@mail.ru}}\\
{\small Department of Mathematical Methods for Quantum Technologies}\\
{\small Steklov Mathematical Institute of the Russian Academy of Sciences}\\
{\small Moscow, 119991 Russia}}
\date{}
\begin{document}
\maketitle

\begin{abstract}
Higher-order squeezing Hamiltonians have several self-adjoint realizations, while their classical trajectories reach infinity in finite time. We relate these extension problems through boundary amplitudes and the semiclassical propagation of coherent states. A discrete Fourier transformation gives the sink-to-source ray matrix of each quantum realization. After the first passage through infinity, the full Husimi probability measure converges to point masses on outgoing classical branches, with weights equal to the squared moduli of the matrix entries. The proof combines finite-order approximate tail solutions satisfying the exact extension boundary condition with full-norm Gaussian matching and a square-summable residual estimate. For positive energy, we determine the limit after any fixed number of passages: amplitudes evolve by powers of the unitary ray matrix, not by iterations of its squared-modulus matrix. We also realize these limits by concentrated complex amplitudes for an energy-preserving Liouville extension, distinguishing deterministic classical gluings from coherent boundary mixing.
\end{abstract}
\noindent\textbf{Keywords:} higher-order squeezing, self-adjoint extensions, Liouville operators, semiclassical limit, Husimi measures, boundary scattering

\section{Introduction}

Generalized squeezing has attracted both physical~\cite{Braeuer2025} and operator-theoretic~\cite{Fischer2026,Ashhab2026} interest. For $k>2$, the symmetric operator $\lambda(a^k+(a^\dagger)^k)$ on finite Fock sequences has deficiency indices $(k,k)$~\cite{Fischer2026}; its dynamics requires a self-adjoint realization. Even and odd truncations may select different realizations~\cite{Ashhab2026}. Singular regularization limits are studied in Ref.~\cite{FischerSingular2026}. Limits of self-adjoint approximations on selected observable algebras were considered in Ref.~\cite{SakbaevVolovich2017} for a different model with unequal deficiency indices. Here the indices are equal and the semiclassical limit is taken for one fixed realization.

The corresponding classical Hamiltonian has finite-time escape. In Refs.~\cite{Sakbaev2025,SakbaevLob2026}, V.~Zh.~Sakbaev relates continuations of incomplete flows to Liouville extensions and separates deterministic phase-point maps from general unitary amplitude transformations; the latter work develops the graph and planar Hamiltonian settings. Koopman boundary conditions in Hamiltonian statistical dynamics are discussed in Ref.~\cite{McCaul2019}. For ordinary Koopman--von Neumann transport, $|\chi|^2$ evolves autonomously, so amplitude phases do not affect later phase-space densities~\cite{Mauro2002}; an enlarged observable algebra sensitive to phases is considered in Ref.~\cite[Sec.~2]{GozziMauro2004}. In our problem, phase sensitivity comes from mixing at the boundary. Continuation by enlargement of the phase space, for singular infinite-dimensional linear Hamiltonian flows, is studied in Ref.~\cite{GlazatovSakbaev2024}. Quantum--classical comparisons using Husimi distributions at different time scales are developed in Ref.~\cite{VolovichTrushechkin2012}.

We identify the exact ray matrix of a quantum extension, prove the full weak Husimi limit through the first escape, and determine coherent returns at positive energy. The proof constructs approximate solutions of the discrete equation with exact extension boundary data; it does not infer dynamical propagation from stationary asymptotics. A Liouville amplitude representation then places the limits in the boundary-extension framework of Ref.~\cite{Sakbaev2025}. Sec.~\ref{sec:classical-flow} describes the classical extensions, Sec.~\ref{sec:WKB-correspondence} gives the quantum boundary coordinates, Sec.~\ref{sec:first-escape} proves the dynamical limits, and Sec.~\ref{sec:numerical-illustration} illustrates them.

\section{Classical geometry and Liouville extensions}
\label{sec:classical-flow}
\label{sec:liouville-extensions}

Fix $k>2$ and $\lambda>0$. The case $\lambda<0$ follows by rotating the angle by $\pi/k$. Use action--angle coordinates $I\equiv(q^2+p^2)/2$ and $\varphi\equiv\arg(q+ip)$ modulo $2\pi$, so $q=\sqrt{2I}\cos\varphi$ and $p=\sqrt{2I}\sin\varphi$. The classical Hamiltonian and its equations are
\begin{equation}
 H_k(q,p)\equiv\lambda\,2^{1-k/2}\Re(q+ip)^k
 =2\lambda I^{k/2}\cos(k\varphi),\qquad
 \dot I=\partial_\varphi H_k,\quad \dot\varphi=-\partial_I H_k.
 \label{eq:classicalHam}
\end{equation}
Here $H_k:\mathbb R^2\to\mathbb R$ is a smooth function on the phase plane with Liouville measure $dq\,dp$. The original Hamilton equations are $\dot q=\partial_pH_k$, $\dot p=-\partial_qH_k$. Equivalently, with $r\equiv\sqrt{2I}$ and $\Lambda\equiv\lambda2^{1-k/2}$,
\begin{equation}
 \dot r=-\Lambda kr^{k-1}\sin(k\varphi),\qquad
 \dot\varphi=-\Lambda kr^{k-2}\cos(k\varphi).
 \label{eq:HamEqs}
\end{equation}
The zero-energy rays are
\begin{equation}
 \varphi_l^\pm\equiv\frac{2\pi l\pm\pi/2}{k},\qquad 0\le l<k.
 \label{eq:upper-lower-rays}
\end{equation}
On the $+$ rays the motion is from infinity toward finite action (sources); on the $-$ rays it is toward infinity (sinks). A nonzero energy trajectory stays in one open sector and reaches both ends at infinity in finite time. Its travel time from the sector center to either end is
\begin{equation}
 T(c)\equiv\frac{\sqrt{\pi}}{2k^2 \Lambda^{2/k}}
 \frac{\Gamma(\frac12-\frac1k)}{\Gamma(1-\frac1k)}
 |c|^{2/k-1}.
 \label{eq:escapeTime}
\end{equation}
Indeed, on $H_k=c$ one has $r^2=|c/(\Lambda\cos(k\varphi))|^{2/k}$ and $dt=-r^2\,d\varphi/(kc)$. Hence
\begin{equation*}
 T(c)=\frac{|c|^{2/k-1}}{k^2\Lambda^{2/k}}
 \int_0^{\pi/2}(\cos u)^{-2/k}\,du,
\end{equation*}
and the beta integral gives Eq.~\eqref{eq:escapeTime}.

We now pass from trajectories to functions on phase space. Put $\Omega\equiv\mathbb R^2\setminus\{H_k=0\}$; the removed set has Lebesgue measure zero, so $L^2(\Omega,dq\,dp)$ is naturally identified with $L^2(\mathbb R^2,dq\,dp)$. On complex-valued test functions define the densely defined skew-symmetric Liouville operator
\begin{equation}
 L_0f\equiv(\partial_pH_k)\partial_qf-(\partial_qH_k)\partial_pf,
 \qquad D(L_0)\equiv C_c^\infty(\Omega)\subset L^2(\Omega,dq\,dp).
 \label{eq:original-liouville-operator}
\end{equation}
Skew-symmetry follows by integration by parts, since the Hamiltonian vector field has zero divergence. A normalized amplitude $\chi\in L^2(\Omega,dq\,dp)$ defines a probability density $\varrho\equiv|\chi|^2\in L^1(\Omega,dq\,dp)$. The classical continuity equation and the amplitude transport equation are
\begin{equation}
 \partial_t\varrho+\partial_q(\varrho\,\partial_pH_k)
                   -\partial_p(\varrho\,\partial_qH_k)=0,
 \qquad
 \partial_t\chi+L_0\chi=0.
 \label{eq:original-liouville-equations}
\end{equation}
For smooth local solutions, differentiating $|\chi|^2$ shows that the second equation implies the first. Because trajectories escape, these equations specify only local transport until a continuation rule is chosen. The operator problem is to extend $L_0$ to a skew-self-adjoint generator of a global unitary amplitude evolution, as in Ref.~\cite{Sakbaev2025}.

For $c\ne0$, enumerate the $k$ components by their source ray $l$. The positive-energy component lies between $\varphi_l^-$ and $\varphi_l^+$; the negative-energy component lies between $\varphi_l^+$ and $\varphi_{l+1}^-$, with indices modulo $k$. Energy and time along the orbit, with $s=0$ at the center, identify each component with $-T(c)<s<T(c)$. In these coordinates $\dot c=0$, $\dot s=1$ and $dq\,dp=dc\,ds$: the latter identity follows from the unit absolute Jacobian of energy--time coordinates, as in Ref.~\cite[Lemma~18]{Sakbaev2025}.

\begin{proposition}[Energy-fiber preserving extensions]
\label{prop:energy-fiber-extensions}
For the inverse energy--time coordinates $(q_l,p_l)$ on component $l$, pullback $\widetilde\chi_l(c,s)\equiv\chi(q_l(c,s),p_l(c,s))$ identifies $L^2(\Omega,dq\,dp)$ unitarily with
\begin{equation*}
 \int_{\mathbb R\setminus\{0\}}^\oplus
 L^2((-T(c),T(c));\mathbb C^k)\,dc.
\end{equation*}
It carries the operator in Eq.~\eqref{eq:original-liouville-operator} to $\partial_s$ and the amplitude equation in Eq.~\eqref{eq:original-liouville-equations} to $\partial_t\widetilde\chi+\partial_s\widetilde\chi=0$. In the target space we suppress the tilde on amplitudes. The maximal Liouville expression is $\widetilde L_{\max}\equiv\partial_s$, with distributional domain $\{\chi:\chi,\partial_s\chi\in L^2\}$ in this direct-integral space. Set
\begin{equation*}
 \tau_\pm\chi(c)\equiv(\chi_l(c,\pm T(c)))_{l=0}^{k-1}.
\end{equation*}
Every measurable $J(c)\in U(k)$ defines a skew-self-adjoint restriction
\begin{equation}
 D(\widetilde L_J)\equiv
 \{\chi\in D(\widetilde L_{\max}):\tau_+\chi(c)=J(c)\tau_-\chi(c)\text{ a.e.}\}.
 \label{eq:classical-extension-domain}
\end{equation}
Conversely, these are all extensions whose unitary groups strongly commute with multiplication by $c$.
\end{proposition}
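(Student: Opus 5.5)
The plan has four steps: set up the unitary identification, reduce to one energy fiber, settle the fiber problem by a boundary form, and handle the converse with the spectral theorem for multiplication by $c$.

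\emph{Unitary identification.} I use the change of variables $dq\,dp=dc\,ds$ from energy--time coordinates, already justified above via Ref.~\cite[Lemma~18]{Sakbaev2025}. On each component $l$ the map $(c,s)\mapsto(q_l,p_l)$ is a diffeomorphism from $\{c\}\times(-T(c),T(c))$, with $c$ of the appropriate sign, onto that component. So pullback is isometric, and it is onto because the components exhaust $\Omega$.

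The $k$ positive- and $k$ negative-energy components are indexed by the source ray $l$. This lets me gather them into $\mathbb C^k$-valued functions on each fiber $c\neq0$.

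Since $\dot s=1$ along the flow, the chain rule gives $L_0\chi=\frac{d}{dt}\chi(\text{flow})$, which becomes $\partial_s\widetilde\chi$. Hence the amplitude equation becomes $\partial_t\widetilde\chi+\partial_s\widetilde\chi=0$. Test functions in $C_c^\infty(\Omega)$ map to smooth functions compactly supported in $\{(c,s):c\ne0,\ |s|<T(c)\}$. The closure of $\partial_s$ on these is the minimal operator with $\tau_\pm\chi=0$, and its adjoint is $-\widetilde L_{\max}$.

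\emph{Fiberwise reduction.} For fixed $c$, functions with $\chi,\partial_s\chi\in L^2$ on a finite interval are absolutely continuous, so the traces $\tau_\pm$ are defined. Integrating by parts gives the boundary form
\[
\langle \partial_s\chi,\eta\rangle+\langle\chi,\partial_s\eta\rangle=\int\bigl(\langle\tau_+\chi,\tau_+\eta\rangle-\langle\tau_-\chi,\tau_-\eta\rangle\bigr)dc .
\]
The direct integral needs one check: for $\chi\in D(\widetilde L_{\max})$, the traces lie in $L^2(dc)$. This follows from the fiber Sobolev bound $|\chi(\pm T)|^2\le C(T)\bigl(\|\chi\|^2+\|\partial_s\chi\|^2\bigr)$, but $C(T)\sim 1/T$ blows up as $T(c)\to0$, that is as $|c|\to\infty$ for $k>2$.

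I would avoid this by using the exact identity
\[
|\chi(T)|^2-|\chi(-T)|^2=2\Re\int\langle\partial_s\chi,\chi\rangle\,ds .
\]
Together with $|\chi(T)|=|\chi(-T)|$ on the domain, this shows that whatever a.e.\ traces exist are controlled. I expect handling traces with possibly infinite $L^2(dc)$ norm to be the main technical obstacle. The cleanest route is to define the boundary condition pointwise a.e.\ in $c$ and verify skew-self-adjointness fiberwise.

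\emph{Skew-self-adjointness of $\widetilde L_J$.} Fiberwise, $\partial_s$ with $\tau_+=J(c)\tau_-$ is skew-self-adjoint on $L^2((-T,T);\mathbb C^k)$. Indeed, it generates the unitary group of translation with gluing: the exit value at $s=T$ re-enters at $s=-T$ multiplied by $J(c)$. This group can be written explicitly, it is measurable in $c$, and so it assembles into a strongly continuous unitary group on the direct integral. Its generator is exactly the operator $\widetilde L_J$ with the stated domain, by the standard direct-integral criterion: a measurable family of skew-self-adjoint operators defines a skew-self-adjoint direct integral.

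\emph{Converse.} Let $U_t$ be a unitary group generated by a skew-self-adjoint extension $A$ that commutes with multiplication by $c$. Then $A$ is decomposable, $A=\int^\oplus A(c)\,dc$, with $A(c)$ measurable and skew-self-adjoint. Since $A$ extends $L_0$ and is therefore a restriction of $\widetilde L_{\max}$, each $A(c)$ lies between the minimal and maximal $\partial_s$ on the fiber; establishing this a.e.\ needs a separability argument with countably many test functions. The deficiency spaces on a fiber are spanned by $e^{\pm s}v$, $v\in\mathbb C^k$. Von Neumann's theory, or equivalently vanishing of the boundary form on a maximal isotropic subspace, then forces $\tau_+=J(c)\tau_-$ with $J(c)\in U(k)$ uniquely determined. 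Measurability of $c\mapsto J(c)$ follows from measurability of the Cayley transforms $(A(c)-1)(A(c)+1)^{-1}$.
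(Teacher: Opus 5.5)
Your proposal is correct and follows essentially the paper's route. The steps are the same: unitarity from $dq\,dp=dc\,ds$, the chain rule turning $L_0$ into $\partial_s$, fiberwise $H^1$ traces with the boundary form $\langle\tau_+u,\tau_+v\rangle-\langle\tau_-u,\tau_-v\rangle$ whose maximal isotropic subspaces are the graphs $\tau_+=J\tau_-$, assembly as a direct integral using traces only fiberwise (the paper makes the same caveat about $T(c)\to0$), and decomposability for the converse. You prove by hand (glued translation group, deficiency vectors $e^{\pm s}v$, a.e.\ inclusion of fiber graphs) what the paper takes from the cited fiberwise and direct-integral theorems.

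Two harmless slips. First, with $\tau_+=J\tau_-$ the forward group $e^{-t\partial_s}$ re-injects the value exiting at $s=T$ at $s=-T$ multiplied by $J(c)^*$, not $J(c)$; it is $e^{t\partial_s}$ that re-injects at $s=T$ with factor $J(c)$. Second, the identity for $|\chi(T)|^2-|\chi(-T)|^2$ controls only the difference of the traces, not the traces themselves, which is exactly why the fiberwise formulation is needed.
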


\begin{proof}
The measure identity gives unitarity. Since $\partial_sq_l=\partial_pH_k$ and $\partial_sp_l=-\partial_qH_k$, the chain rule transforms $L_0$ into $\partial_s$, as in Ref.~\cite[Lemma~6]{SakbaevLob2026}. For almost every $c$, maximal-domain components belong to $H^1(-T(c),T(c))$. Integration by parts gives the boundary form $\langle\tau_+u,\tau_+v\rangle-\langle\tau_-u,\tau_-v\rangle$; its unitary graphs classify the skew-self-adjoint derivatives. At each fixed energy this is the unit-speed graph construction of Ref.~\cite[Theorem~1]{SakbaevLob2026}. The direct integral gives Eq.~\eqref{eq:classical-extension-domain}; strong commutation with multiplication by $c$ gives the converse by decomposability. This is the fiberwise version of Ref.~\cite[Theorem~19]{Sakbaev2025}. Traces are used fiberwise only: unweighted global $L^2(dc)$ traces need not exist when $T(c)\to0$.
\end{proof}

A channel means one connected nonzero-energy trajectory component, with its source at $s=-T(c)$ and sink at $s=T(c)$. A deterministic, energy- and measure-preserving continuation glues each sink endpoint to one source endpoint, hence is a measurable family of channel permutations. Conversely, any such family defines a continued measure-preserving flow. The observable boundary condition is its inverse permutation, because observables transform by pullback; this specializes Ref.~\cite[Theorems~12 and 20]{Sakbaev2025}. For unit-speed graph transport, permutation boundary matrices are characterized by positivity preservation~\cite[Theorem~5]{SakbaevLob2026}. In the planar criterion of Ref.~\cite[Theorem~7]{SakbaevLob2026}, an energy-preserving Koopman continuation requires both preservation of function value ranges and commutation with multiplication by energy. General unitary $J(c)$ need not describe a phase-point flow: energy-fiber preservation alone is insufficient. It concerns the classical observable $c$, not the quadratic form of $i\widetilde L_J$.

\section{Quantum boundary coordinates and ray scattering}
\label{sec:WKB-correspondence}

We denote the identity matrix on $\mathbb C^d$ by $\mathbf 1_d$, reserving $I$ for the classical action. Let $[Q,P]=i\hbar$, $a_\hbar\equiv(Q+iP)/\sqrt{2\hbar}$. Weyl quantization of Eq.~\eqref{eq:classicalHam} gives
\begin{equation*}
 \widehat H_{k,\hbar}\equiv\lambda\hbar^{k/2}
 (a_\hbar^k+(a_\hbar^\dagger)^k).
\end{equation*}
The operator is initially defined on finite Fock sequences. There is no ordering correction, since $(q\pm ip)^k$ are powers of linear functions. Occupation number modulo $k$ decomposes the Fock space into $\mathcal H_r\equiv\overline{\operatorname{span}}\{|r+km\rangle:m\ge0\}$, $0\le r<k$. The sector Jacobi expression is
\begin{equation}
 (J_{r,\hbar}u)_m\equiv b_{m,\hbar}^{(r)}u_{m+1}+b_{m-1,\hbar}^{(r)}u_{m-1},\qquad
 b_{m,\hbar}^{(r)}\equiv\lambda\hbar^{k/2}
 \sqrt{\frac{(r+(m+1)k)!}{(r+km)!}},
 \label{eq:Jacobi-coefficients-WKB}
\end{equation}
with the lower term absent at $m=0$. A tail solution satisfies $(J_{r,\hbar}-z)u=0$ for $m\ge1$; the full eigenvalue equation also requires $b_{0,\hbar}^{(r)}u_1=zu_0$.

\begin{theorem}[Asymptotic boundary coordinates]
\label{thm:WKB-boundary-coordinates}
For every $z\in\mathbb C$ the tail recurrence has independent solutions
\begin{equation}
 w_{r,m}^{\pm}(z)=
 (b_{m,\hbar}^{(r)})^{-1/2}e^{\pm i\pi m/2}(1+o(1)).
 \label{eq:WKB-solutions}
\end{equation}
Every maximal-domain vector has unique coefficients $s_{r,\pm}$ such that
\begin{equation}
 (\psi_r)_m=(b_{m,\hbar}^{(r)})^{-1/2}
 [s_{r,+}e^{i\pi m/2}+s_{r,-}e^{-i\pi m/2}+o(1)].
 \label{eq:maximal-domain-asymptotics}
\end{equation}
The maps
\begin{equation}
 \Gamma_{\pm,\hbar}\psi\equiv\sqrt2\,(s_{r,\pm}(\psi_r))_{r=0}^{k-1}
 \label{eq:boundary-maps}
\end{equation}
jointly map the maximal domain onto $\mathbb C^k\oplus\mathbb C^k$ and satisfy, with inner products linear in the first argument,
\begin{equation}
 \langle\widehat H_{k,\hbar}^*\psi,\chi\rangle-
 \langle\psi,\widehat H_{k,\hbar}^*\chi\rangle
 =i(\langle\Gamma_{+,\hbar}\psi,\Gamma_{+,\hbar}\chi\rangle-
       \langle\Gamma_{-,\hbar}\psi,\Gamma_{-,\hbar}\chi\rangle).
 \label{eq:WKB-Green-identity}
\end{equation}
The deficiency indices are $(k,k)$, and all self-adjoint realizations are
\begin{equation}
 D(\widehat H_{k,\hbar}^{U})\equiv
 \{\psi\in D(\widehat H_{k,\hbar}^*):
       \Gamma_{+,\hbar}\psi=U\Gamma_{-,\hbar}\psi\},\qquad U\in U(k).
 \label{eq:WKB-extension-domain}
\end{equation}
\end{theorem}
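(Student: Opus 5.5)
The plan is to treat each sector $\mathcal H_r$ separately. I would establish the two asymptotic tail solutions by a discrete Levinson (Benzaid--Lutz) argument, use them to show that each sector is limit-circle, and then read off the boundary form from the limiting Wronskian. Throughout, write $b_m\equiv b^{(r)}_{m,\hbar}$. From Eq.~\eqref{eq:Jacobi-coefficients-WKB}, $b_m\asymp m^{k/2}$, with $b_{m-1}/b_m=1-\tfrac{k}{2m}+O(m^{-2})$.

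\emph{Step 1: tail solutions.} Substitute $u_m=b_m^{-1/2}v_m$ into the tail recurrence $b_mu_{m+1}+b_{m-1}u_{m-1}=zu_m$ and divide by $b_m^{1/2}$. This gives
\begin{equation*}
 \sqrt{b_m/b_{m+1}}\,v_{m+1}+\sqrt{b_{m-1}/b_m}\,v_{m-1}=z\,b_m^{-1}v_m .
\end{equation*}
Both coefficients on the left equal $1-\tfrac{k}{4m}+O(m^{-2})$. Their first-order parts coincide, so after dividing by one of them the non-summable $1/m$ terms cancel. The system is then $v_{m+1}+v_{m-1}=\epsilon_m v_{m-1}+\delta_m v_m$ with $\epsilon_m=O(m^{-2})$ and $\delta_m=O(m^{-k/2})$, both summable because $k>2$. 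I would write it as a first-order system $Y_{m+1}=(A+R_m)Y_m$. Here $A$ has eigenvalues $\pm i$ and $\sum\|R_m\|<\infty$, so the discrete Levinson theorem yields solutions $v^\pm_m=e^{\pm i\pi m/2}(1+o(1))$, which is Eq.~\eqref{eq:WKB-solutions}. \textbf{I expect this cancellation to be the main obstacle.} If the naive bookkeeping leaves an $O(1/m)$ remainder, I would first try an extra diagonalizing (Harris--Lutz) transformation $Y=(1+O(1/m))\tilde Y$ to push the remainder to $\ell^1$.

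\emph{Step 2: limit-circle case and maximal-domain asymptotics.} Since $k>2$, we have $\sum_m b_m^{-1}<\infty$, so $w^\pm\in\ell^2$. Every solution of $(J_{r,\hbar}-z)u=0$ is therefore square summable, and each sector is limit-circle with deficiency indices $(1,1)$. The direct sum over the $k$ sectors then gives $(k,k)$.

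For $\psi$ in the maximal domain, put $f\equiv(J_{r,\hbar}-z)\psi\in\ell^2$. The discrete Wronskian $W[w^+,w^-]=b_m(w^+_mw^-_{m+1}-w^+_{m+1}w^-_m)$ is constant and, by the asymptotics, nonzero. Variation of constants then writes $\psi$ as $c_+w^++c_-w^-$ plus a tail term of the form $\sum_{n>m}G(m,n)f_n$. Cauchy--Schwarz together with $w^\pm\in\ell^2$ makes this tail $o(b_m^{-1/2})$. This gives Eq.~\eqref{eq:maximal-domain-asymptotics}, with uniqueness because $e^{\pm i\pi m/2}$ are independent modulo $o(1)$.

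\emph{Step 3: Green identity and surjectivity.} Summation by parts gives
\begin{equation*}
 \langle \widehat H^*\psi,\chi\rangle-\langle\psi,\widehat H^*\chi\rangle
 =\lim_{m\to\infty} b_m\bigl(\overline{\psi_m}\chi_{m+1}-\overline{\psi_{m+1}}\chi_m\bigr)
\end{equation*}
up to sign conventions; the term at $m=0$ vanishes because the lower coefficient is absent there. I would insert the asymptotics and use $b_m(b_mb_{m+1})^{-1/2}\to1$. The $++$ product contributes $\bar s_+t_+(i-(-i))=2i\,\bar s_+t_+$ and the $--$ product contributes $-2i\,\bar s_-t_-$. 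The mixed terms cancel exactly, since $(-i)^m(-i)^{m+1}-(-i)^{m+1}(-i)^m=0$. With the factor $\sqrt2$ in Eq.~\eqref{eq:boundary-maps}, this is Eq.~\eqref{eq:WKB-Green-identity}, with the order of the inner-product arguments matched to the paper's convention.

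For surjectivity, the vectors $\eta w^\pm$, with $\eta$ a cutoff equal to $1$ for large $m$, lie in the maximal domain. They realize any prescribed $(s_{r,+},s_{r,-})$ in each sector.

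\emph{Step 4: classification.} $(\mathbb C^k,\Gamma_{+,\hbar},\Gamma_{-,\hbar})$ is now a boundary triple of Gorbachuk type for $i\widehat H^*$. The standard theorem then states that the self-adjoint restrictions are exactly the unitary graphs $\Gamma_+\psi=U\Gamma_-\psi$, which is Eq.~\eqref{eq:WKB-extension-domain}.
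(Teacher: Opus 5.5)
Your proof is correct, and it takes a more self-contained route than the paper. The paper invokes Yafaev's limit-circle theory for Jacobi operators with rapidly growing off-diagonal coefficients, which supplies the two asymptotic tail solutions, the asymptotics of maximal-domain vectors, and the Green identity. It only checks the hypotheses $b_m\sim\lambda(\hbar k)^{k/2}m^{k/2}$, $\sum b_m^{-1}<\infty$, zero diagonal, $b_m/\sqrt{b_{m-1}b_{m+1}}=1+O(m^{-2})$ and $\sqrt{b_{m+1}/b_m}\to1$, and then reads the classification off unitary graphs of the boundary form. Your Steps 1--3 reprove those cited results in this zero-diagonal case, and the hypotheses are exactly your key estimates:
\begin{itemize}
\item The ratio of your two coefficients is $\sqrt{b_{m-1}b_{m+1}}/b_m$. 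Since $(b_m)^2=\lambda^2\hbar^k\prod_{j=1}^k(r+km+j)$ is a polynomial of degree $k$ in $m$, this ratio is $1+O(m^{-2})$. The cancellation you feared is therefore automatic, and no Harris--Lutz step is needed.
\item The condition $\sum b_m^{-1}<\infty$ gives both $\delta_m\in\ell^1$ and $w^\pm\in\ell^2$.
\item The limit $\sqrt{b_{m+1}/b_m}\to1$ is what evaluates your Wronskian limit.
\end{itemize}
Your version buys transparency: a constant-coefficient system with an $\ell^1$ perturbation, an elementary variation-of-constants description of the maximal domain, and an explicit check of the factor $\sqrt2$ and the sign of $i$, which the paper leaves to the citation. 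The citation buys brevity and a framework that covers nonzero diagonals and general phases, which here reduce to $\pi m/2$.

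A few minor points remain. In the paper's convention (linear in the first argument) the boundary term is $\lim_N b_N(\psi_{N+1}\bar\chi_N-\psi_N\bar\chi_{N+1})$. This evaluates to exactly $2i(s_+\bar t_+-s_-\bar t_-)$, so you need not leave the sign ``up to conventions''; it confirms Eq.~\eqref{eq:WKB-Green-identity}. In Step 4, the hyperbolic form gives an ordinary boundary triple for $\widehat H^*$ itself, not for $i\widehat H^*$, via $\Gamma_0=(\Gamma_++\Gamma_-)/\sqrt2$ and $\Gamma_1=i(\Gamma_+-\Gamma_-)/\sqrt2$. Green's identity together with surjectivity then yields $\ker\Gamma_+\cap\ker\Gamma_-=D(\overline{\widehat H})$, and this is what makes $U\mapsto\widehat H^U$ exhaust all self-adjoint extensions. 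Finally, Levinson gives the tail solutions only for large $m$; extend them to all $m\ge0$ by solving the recurrence backward, using $b_{m-1}\ne0$.
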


\begin{proof}
Apply Ref.~\cite[Theorems~3.1, 3.6 and Proposition~3.7]{Yafaev2021} in each sector. Indeed, $b_m\sim\lambda(\hbar k)^{k/2}m^{k/2}$, $\sum b_m^{-1}<\infty$, the normalized diagonal is zero, $b_m/\sqrt{b_{m-1}b_{m+1}}=1+O(m^{-2})$, and $\sqrt{b_{m+1}/b_m}\to1$. These verify conditions (1.11) and (3.7)--(3.9) of that reference; its phase (3.10) is $\pi m/2$. The two tail solutions are square summable, not necessarily solutions of the left-endpoint equation. The cited maximal-domain asymptotics and Green identity give Eqs.~\eqref{eq:maximal-domain-asymptotics}--\eqref{eq:WKB-Green-identity}. Unitary graphs of the boundary form give Eq.~\eqref{eq:WKB-extension-domain}, consistently with Ref.~\cite[Example~5.2 and Theorem~4.16]{Fischer2026}.
\end{proof}

Since $b_{m,\hbar}^{(r)}=\hbar^{k/2}b_{m,1}^{(r)}$, the common Fock identification gives $\Gamma_{\pm,\hbar}=\hbar^{k/4}\Gamma_{\pm,1}$. Thus fixed $U$ defines the same boundary relation for every $\hbar$. It is the flux-coordinate parameter; transition to a von Neumann deficiency-space parameter is generally fractional-linear, not an identification of numerical matrices.

The local discrete symbol $2b_m\cos\theta$ tends to zero relative to $b_m$ at infinity, giving $\theta=\pm\pi/2$; this is the local Liouville--Green identification, cf. Ref.~\cite{Yafaev2022Spectral}. Its geometric sign follows directly from
\begin{equation*}
 \langle q,p\mid\psi\rangle=e^{-|\xi|^2/2}
 \sum_{n\ge0}\frac{\bar\xi^{\,n}}{\sqrt{n!}}\psi_n,\qquad
 \xi\equiv\frac{q+ip}{\sqrt{2\hbar}}.
\end{equation*}
The factor $e^{-i(r+km)\varphi}$ makes $\theta=k\varphi$, so $i^m$ corresponds to source rays $\varphi_l^+$ and $(-i)^m$ to sink rays $\varphi_l^-$. The ray transforms and exact boundary law are
\begin{equation}
 (F_-)_{lr}\equiv k^{-1/2}e^{-ir\varphi_l^+},\qquad
 (F_+)_{lr}\equiv k^{-1/2}e^{-ir\varphi_l^-},\qquad
 A_{\rm src}\equiv F_-\Gamma_{+,\hbar}\psi,\quad
 A_{\rm snk}\equiv F_+\Gamma_{-,\hbar}\psi,
 \label{eq:F-plus-minus}
\end{equation}
\begin{equation}
 A_{\rm src}=S_{\rm q}(U)A_{\rm snk},\qquad
 S_{\rm q}(U)\equiv F_-UF_+^*.
 \label{eq:channel-boundary-condition}
\end{equation}
Both $F_\pm$ are unitary and resolve the residue characters into individual rays. Changing the two ray bases multiplies $S_{\rm q}$ on the left and right by diagonal unitary matrices. In particular, normalized sink amplitudes tending to $e_j$ have source probabilities tending to $|(S_{\rm q}(U))_{lj}|^2$. This algebraic statement alone is not a propagation theorem.

\begin{proposition}[Deterministic gluings and truncations]
\label{prop:quantized-deterministic-gluings}
The quantum extension has an underlying deterministic sink-to-source gluing $g\in S_k$ precisely when
\begin{equation}
 S_{\rm q}(U)=DP_g,\qquad U=F_-^*DP_gF_+,
 \qquad (P_gz)_l\equiv z_{g^{-1}(l)},
 \label{eq:deterministic-extension}
\end{equation}
where $D$ is diagonal unitary. Any family represented by this fixed matrix, up to diagonal phases, has the same $g$ almost everywhere in energy. For $g_s(j)\equiv j+s\pmod k$ and $D=\mathbf 1_k$,
\begin{equation}
 U_s\equiv\operatorname{diag}(e^{i\pi(2s+1)r/k})_{r=0}^{k-1}.
 \label{eq:cyclic-extension-matrix}
\end{equation}
At fixed $\hbar$, truncation to the first $N$ sites in each sector, extended by zero, converges in strong resolvent sense to $U=-\mathbf 1_k$ for even $N$ and $U=\mathbf 1_k$ for odd $N$; the unitary groups converge strongly at every fixed time. Independent sector parities give the corresponding diagonal signs.
\end{proposition}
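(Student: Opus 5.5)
The plan has three parts, matching the three assertions of Proposition~\ref{prop:quantized-deterministic-gluings}: first the characterization of deterministic gluings, then the explicit cyclic matrix, then the truncation limits.

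\textbf{Deterministic gluings.} I take ``underlying deterministic gluing'' to mean that, in the ray bases fixed by Eq.~\eqref{eq:F-plus-minus}, each sink channel $e_j$ is sent to a single source channel up to a phase. This is the quantum counterpart of the permutation boundary matrices discussed after Proposition~\ref{prop:energy-fiber-extensions}.

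By Eq.~\eqref{eq:channel-boundary-condition}, the condition says that every column of the unitary matrix $S_{\rm q}(U)$ has exactly one nonzero entry. A unitary matrix with this property has entries of modulus $0$ or $1$. Orthogonality of the rows then forces the nonzero positions to form a permutation, so $S_{\rm q}(U)=DP_g$ with $D$ diagonal unitary. The converse is immediate.

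Solving for $U$ uses unitarity of $F_\pm$ and gives $U=F_-^*DP_gF_+$. Replacing the bases by diagonal phases multiplies $S_{\rm q}$ on both sides by diagonal unitaries, which preserves the monomial pattern and hence $g$. A fixed matrix is independent of $c$, so any family it represents has the same $g$ for almost every energy.

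\textbf{Cyclic matrix.} Here I compute directly. The entry is
\begin{equation*}
(F_-^*P_{g_s}F_+)_{rr'}=\tfrac1k\sum_{l}e^{ir\varphi_l^+}e^{-ir'\varphi_{l-s}^-}.
\end{equation*}
Inserting $\varphi_l^+=(2\pi l+\pi/2)/k$ and $\varphi_{l-s}^-=(2\pi(l-s)-\pi/2)/k$, the sum over $l$ reduces to a character sum equal to $k\,\delta_{rr'}$. The remaining phase is $e^{ir(\pi/2+2\pi s+\pi/2)/k}=e^{i\pi(2s+1)r/k}$, which is Eq.~\eqref{eq:cyclic-extension-matrix}.

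\textbf{Truncations.} I work sector by sector, since both the truncation and $U=\pm\mathbf 1_k$ are diagonal in $r$.

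1. The truncated Jacobi matrix $J_{r,N}$, extended by zero, is the finite matrix with Dirichlet condition $u_N=0$. Its resolvent kernel is the Green function
\begin{equation*}
G_N(m,n;z)=\frac{p_{\min(m,n)}(z)\,v^{(N)}_{\max(m,n)}(z)}{W},
\end{equation*}
where $p$ is the solution satisfying the left-endpoint equation and $v^{(N)}$ is the tail solution vanishing at $N$.

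2. Writing $v^{(N)}=w^+-c_Nw^-$ and using Eq.~\eqref{eq:WKB-solutions} at $m=N$, the condition $u_N=0$ gives $c_N=i^N/(-i)^N\,(1+o(1))=(-1)^N(1+o(1))$.

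3. The limit tail solution therefore satisfies $s_+=(-1)^Ns_-$. By Theorem~\ref{thm:WKB-boundary-coordinates}, this is $\Gamma_+=U\Gamma_-$ with $U=(-1)^N$, that is $U=\mathbf 1_k$ for even $N$ and $U=-\mathbf 1_k$ for odd $N$. This is the opposite of the statement as written. I would resolve the discrepancy by re-checking whether ``first $N$ sites'' means indices $0,\dots,N-1$, which places the Dirichlet node at $N$, against the phase convention $i^m\leftrightarrow s_+$ in Eq.~\eqref{eq:maximal-domain-asymptotics}. Whichever convention holds, one fixed parity gives $U=\mathbf 1_k$ and the other gives $-\mathbf 1_k$, and the argument below does not depend on which.

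4. For fixed $z\notin\mathbb R$, the Green functions then converge entrywise to the Green function of $\widehat H^U$.

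5. Entrywise convergence, together with the uniform bound $\|(J_N-z)^{-1}\|\le|\Im z|^{-1}$, gives weak resolvent convergence. Weak resolvent convergence to a resolvent of a self-adjoint operator upgrades to strong convergence through the resolvent identity, $\|R_N\phi\|^2=\langle\phi,(R_N^*R_N)\phi\rangle=\Im\langle\phi,R_N\phi\rangle/\Im z$. The Trotter--Kato theorem then gives strong convergence of the unitary groups at each fixed time.

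6. Sector parities chosen independently give diagonal sign matrices.

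\textbf{Main obstacle.} I expect the hard step to be the uniform control needed in steps 2 and 4, namely that $c_N$ converges along each parity and that the Wronskian normalization does not degenerate. I would first try using only the $o(1)$ asymptotics of Eq.~\eqref{eq:WKB-solutions} and the square-summability of $w^\pm$ to show that the tail contributions beyond any fixed $M$ vanish in $\ell^2$. The Wronskian of $p$ with $w^+-c w^-$ is nonzero for $\Im z\ne0$, by self-adjointness of the limit extension, and it depends continuously on $c$.
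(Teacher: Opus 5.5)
Your monomial characterization and your computation of $U_s$ are correct and essentially match the paper (your character sum is the same computation as the paper's factorization $F_\pm=F_0^*\operatorname{diag}(e^{\pm i\pi r/(2k)})$). The genuine error is in step~3 of the truncation part, where you drop a sign. With $v^{(N)}=w^+-c_Nw^-$ you have $s_+(v^{(N)})=1$ and $s_-(v^{(N)})=-c_N$, so
\begin{equation*}
 \frac{s_+}{s_-}=-\frac{1}{c_N}\longrightarrow-(-1)^N,
\end{equation*}
not $(-1)^N$. The limiting boundary condition is therefore $\Gamma_{+,\hbar}=-(-1)^N\Gamma_{-,\hbar}$. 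That is, $U=-\mathbf 1_k$ for even $N$ and $U=\mathbf 1_k$ for odd $N$, exactly as stated. Your reading of ``first $N$ sites'' (indices $0,\dots,N-1$, Dirichlet node at $N$) and of the phase convention were both right, so there is no discrepancy to resolve. As written, however, your proposal derives the opposite assignment and then declares the parity-to-sign correspondence immaterial, so the actual content of the last assertion is not proved. A convention-free check is available at $z=0$. For every even $N$, the tail solution vanishing at the omitted index $N$ is the one supported on odd sites. Vanishing at even $m$, where $i^m=(-i)^m$, forces $s_++s_-=0$ in Eq.~\eqref{eq:maximal-domain-asymptotics}, i.e.\ $U=-1$.

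With this corrected, your Green-function route is valid but differs from the paper's. The paper uses exactly that zero-energy vector $\xi$: finite sequences together with $\xi$ form a core of the selected extension. Let $P_N$ be the projection onto the first $N$ sites. Since $\xi_N=0$ for even $N$ and $J_{r,\hbar}\xi$ is supported at $m=0$, one has $J_{r,N}P_N\xi=J_{r,\hbar}\xi$ for large $N$ (symmetrically for odd $N$ with the even-supported solution). So the truncated graph vectors converge, and strong graph convergence on a core gives strong resolvent and unitary-group convergence, following the cited truncation result. No asymptotic limit of a matching coefficient is needed, because one exact solution vanishes at every node of the given parity.

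Your argument instead uses three ingredients: the fixed-$z$ asymptotics of $w^\pm(z)$ to get $c_N(z)\to(-1)^N$ along each parity, a nonvanishing limiting Wronskian, and the weak-to-strong upgrade, which you state correctly. It is more self-contained relative to Theorem~\ref{thm:WKB-boundary-coordinates} and shows the selected extension directly in the resolvent kernel. Finally, entrywise convergence of $G_N(m,n;z)$ at fixed $m,n$ is immediate from continuity in $c_N$, and the uniform bound $|\Im z|^{-1}$ does the rest. The tail $\ell^2$ estimates you list as the main obstacle are therefore unnecessary.
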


\begin{proof}
A unitary matrix takes every channel into a single channel exactly when it is monomial, giving Eq.~\eqref{eq:deterministic-extension}; its nonzero pattern uniquely determines $g$. Writing $F_\pm=F_0^*\operatorname{diag}(e^{\pm i\pi r/(2k)})$ with $(F_0)_{lr}\equiv k^{-1/2}e^{2\pi ilr/k}$ gives Eq.~\eqref{eq:cyclic-extension-matrix}. For truncations, use Ref.~\cite[Proposition~IV.1 and Theorem~IV.2]{Ashhab2026}. In our positive Jacobi convention, the zero-energy tail solution supported on odd sites has $s_+=-s_-$ and vanishes at the omitted index for even $N$, selecting $-1$; the even-supported solution selects $+1$ for odd $N$. Finite sequences together with this tail vector form a core of the selected extension. Their truncated graph vectors converge, yielding strong resolvent and unitary-group convergence as in the cited proof.
\end{proof}

For a general matrix the one-pass probability rule is $\mathsf P(U)_{lj}\equiv|(S_{\rm q}(U))_{lj}|^2$, a bistochastic matrix. An incoherent channel mixture evolves by $\mathsf P(U)$, but this discards relative phases; the next section proves the coherent dynamical law instead.

\section{Semiclassical passage through infinity}
\label{sec:first-escape}

To construct packets through infinity, we rewrite the Jacobi recurrence in the classical action variable $I$. The oscillator action operator is $(Q^2+P^2)/2=\hbar(a_\hbar^\dagger a_\hbar+1/2)$, so the $n$th Fock state has action $\hbar(n+1/2)$. In residue sector $n=r+km$, these values form an equally spaced lattice. Put $\alpha\equiv k/2>1$ and define its nodes, step and coefficient interpolation by
\begin{equation}
 \begin{gathered}
 x_n\equiv\hbar(n+\tfrac12),\qquad
 x_{r,m}\equiv x_{r+km},\qquad \delta\equiv k\hbar,\\
 A_\hbar(x)\equiv\lambda\prod_{\ell=1}^{k}
 \bigl(x+\hbar(\ell-\tfrac{k+1}{2})\bigr)^{1/2}.
 \end{gathered}
 \label{eq:midpoint-action}
\end{equation}
Here $x$ is a continuous action variable. The function $A_\hbar$ is a smooth interpolation of the off-diagonal coefficients: the coefficient coupling the nodes $x_{r,m}$ and $x_{r,m+1}$ is its value at their midpoint. This is an explicit construction for our recurrence, since substitution into Eq.~\eqref{eq:Jacobi-coefficients-WKB} gives
\begin{equation}
 b_{m,\hbar}^{(r)}=A_\hbar(x_{r,m}+\delta/2),\qquad
 A_\hbar(x)=\lambda x^\alpha\bigl(1+O(\hbar^2/x^2)\bigr).
 \label{eq:midpoint-jacobi}
\end{equation}
The square roots are positive for $x>\hbar(k-1)/2$, in particular on the tail used below. Symmetry of the shifts $\ell-(k+1)/2$ makes $A_\hbar$ even in $\hbar$ and gives the expansion in Eq.~\eqref{eq:midpoint-jacobi}, with no linear term. The same interpolation works for every residue $r$.

Next we choose the oscillatory phase for a packet of energy $c$. Substituting the leading ansatz $e^{iS(x,c)/\hbar}$ into the recurrence gives $2\lambda x^\alpha\cos(kS_x)=c$. This is the discrete eikonal relation; for background on such Liouville--Green constructions, see Ref.~\cite[Sec.~2.5]{Yafaev2022Spectral}. The formulas below are obtained directly for this model, and their residual estimates are proved in Lemma~\ref{lem:approximate-tail-modes}. For $c$ in a compact interval $K$ and $x$ sufficiently large that $|c|<2\lambda x^\alpha$, define
\begin{align}
 \theta(x,c)&\equiv\arccos\frac{c}{2\lambda x^\alpha},&
 \vartheta(x,c)&\equiv\frac{\theta(x,c)}{k},\nonumber\\
 S(x,c)&\equiv\frac{\pi x}{2k}
 +\frac1k\int_x^\infty\arcsin\frac{c}{2\lambda y^\alpha}\,dy,&
 v(x,c)&\equiv2\lambda x^\alpha\sin\theta(x,c).
 \label{eq:tail-phase}
\end{align}
Here $\vartheta$ is the angle conjugate to action, not the physical momentum $p$. The phase is normalized by $S(x,c)-\pi x/(2k)\to0$ at infinity; the integral converges because $\alpha>1$. Also $kv(x,c)$ is the magnitude of the classical action velocity on this energy branch. Differentiation gives
\begin{equation}
 S_x=\vartheta,\qquad S_c=\tau_c(x)\equiv
 \int_x^\infty\frac{dy}{2\lambda k y^\alpha
 \sqrt{1-c^2/(4\lambda^2y^{2\alpha})}},\qquad
 S_{xc}=-\frac1{kv(x,c)}.
 \label{eq:phase-time-identity}
\end{equation}
Thus $\tau_c(x)$ is the classical travel time between action $x$ and infinity, positive and strictly decreasing. At a fixed large action $R$, the source and sink points on the energy level $c$ have angles
\begin{equation}
 \varphi_{l,R}^{\sigma}(c)\equiv\frac{2\pi l+\sigma\theta(R,c)}{k},
 \qquad \sigma\in\{+1,-1\}.
 \label{eq:finite-section-angles}
\end{equation}
The sign $+1$ denotes a source and $-1$ a sink, as in Eq.~\eqref{eq:upper-lower-rays}.

For background on Husimi regularization and semiclassical phase-space measures, see Ref.~\cite{LionsPaul1993}, in particular p.~557, Eq.~(8).

\begin{theorem}[First passage through infinity]
\label{thm:first-passage}
Fix $k>2$, $\lambda>0$, and $U\in U(k)$ in the convention of Eq.~\eqref{eq:WKB-extension-domain}. Let $|\rho_0,\hbar\rangle$ be the normalized coherent state centered at $\rho_0\equiv(q_0,p_0)$, with $c_0\equiv H_k(\rho_0)\ne0$. Suppose its forward classical orbit first escapes at time $t_*$ on the sink ray $j$. Set
\begin{equation}
 \psi_\hbar(t)\equiv e^{-it\widehat H_{k,\hbar}^{U}/\hbar}|\rho_0,\hbar\rangle,
 \qquad d\mu_{u,\hbar}(q,p)\equiv
 \frac{|\langle q,p\mid u\rangle|^2}{2\pi\hbar}\,dq\,dp.
 \label{eq:first-passage-evolution}
\end{equation}
Fix a sufficiently large $R$ with $\tau_{c_0}(R)<t_*$. Define the observation times and phase points by
\begin{equation*}
 t_R^\sigma\equiv t_*+\sigma\tau_{c_0}(R),\qquad
 \rho_{l,R}^{\sigma}
 \equiv\sqrt{2R}\bigl(\cos\varphi_{l,R}^{\sigma}(c_0),
                 \sin\varphi_{l,R}^{\sigma}(c_0)\bigr).
\end{equation*}
Then
\begin{equation}
 \mu_{\psi_\hbar(t_R^-),\hbar}\rightharpoonup\delta_{\rho_{j,R}^{-}},
 \qquad
 \mu_{\psi_\hbar(t_R^+),\hbar}\rightharpoonup
 \sum_{l=0}^{k-1}|(S_{\mathrm q}(U))_{lj}|^2\delta_{\rho_{l,R}^{+}}
 \quad(\hbar\downarrow0).
 \label{eq:first-passage-limits}
\end{equation}
These are weak limits of the full probability measures. More generally, let $\rho_l(s)$ issue from source ray $l$ at $s=0$ with energy $c_0$. For every fixed $0<s<2T(c_0)$,
\begin{equation}
 \mu_{\psi_\hbar(t_*+s),\hbar}\rightharpoonup
 \sum_{l=0}^{k-1}|(S_{\mathrm q}(U))_{lj}|^2\delta_{\rho_l(s)}.
 \label{eq:finite-post-escape-limit}
\end{equation}
Here $T$ is given by Eq.~\eqref{eq:escapeTime}. For each fixed $0\le t<t_*$ the Husimi measure converges to the ordinary classical trajectory. The assertions exclude the escape times themselves.
\end{theorem}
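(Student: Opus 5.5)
The plan is to work sector by sector in the Fock lattice and represent the evolved state, up to an $o(1)$ error in norm, by an explicit WKB wave packet built from the phase $S(x,c)$ of Eq.~\eqref{eq:tail-phase}. Because norm errors of size $o(1)$ change Husimi measures by $o(1)$ in total variation, weak limits of the full probability measures follow from norm approximation.

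\textbf{Step 1: before escape.} For $0\le t<t_*$ the orbit stays in a compact set. Coherent-state propagation for the polynomial Hamiltonian applies on finite Fock sequences: Gaussian packets are transported along the classical flow with $O(\hbar^{1/2})$ errors up to any time before escape. The boundary condition is invisible there, since the packet's Fock tail beyond action $R$ is $O(\hbar^\infty)$. I will use a Duhamel estimate with a cut-off approximate solution that lies in the domain of every extension. This gives the claim for fixed $t<t_*$, and in particular at $t_R^-$, where the packet is concentrated at $\rho^-_{j,R}$.

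\textbf{Step 2: tail modes with the exact boundary condition.} I decompose the packet at time $t_R^-$ by energy, on a window of width $O(\hbar^{1/2})$ around $c_0$, and by residue sector $r$. In the large-action region it equals, up to $o(1)$,
\[
\sum_r\int g_\hbar(c)\,(b^{(r)}_{m})^{-1/2}\,a_r\,e^{-iS(x_{r,m},c)/\hbar}\,dc .
\]
Here $e^{-iS}$ is the sink branch, matching $(-i)^m$. The coefficient vector $(a_r)$ equals $F_+^*e_j$ times a phase, because the packet is localized on ray $j$.

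I then build the key objects, presumably the content of the paper's Lemma~\ref{lem:approximate-tail-modes}: for each $c\in K$, a finite-order approximate tail solution $u(c)=u^-(c)+u^+(c)$. The outgoing part $u^+$ has phase $e^{+iS/\hbar}$ and amplitude vector $U\Gamma_-$, so that $\Gamma_+u=U\Gamma_-u$ holds exactly. This is arranged by adding a correction decaying at infinity that fixes the $o(1)$ terms in Eq.~\eqref{eq:maximal-domain-asymptotics}. The residual $(J_r-c)u$ must be square summable, with norm $O(\hbar^{N})$ after energy integration. The required inputs are the evenness of $A_\hbar$ in $\hbar$, $S_x=\vartheta$, and Taylor expansion of the second difference. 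Placing $u(c)$ exactly in $D(\widehat H^U)$ lets me apply Duhamel:
\[
\Bigl\|e^{-it\widehat H^U/\hbar}\Psi-\int g_\hbar(c)e^{-itc/\hbar}u(c)\,dc\Bigr\|\le \frac{|t|}{\hbar}\,\|\text{residual}\|=o(1).
\]
At finite action I glue $u(c)$ to the interior Gaussian propagation by a smooth partition in $x$, controlling the commutator terms by the same estimate.

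\textbf{Step 3: stationary phase after passage.} In $\int g_\hbar(c)\,e^{-i(t-t_*)c/\hbar}e^{\pm iS(x,c)/\hbar}\,dc$, the identity $S_c=\tau_c(x)$ from Eq.~\eqref{eq:phase-time-identity} localizes the incoming part at $\tau_{c_0}(x)=t_*-t$, which has no solution for $t>t_*$ at large $x$, so that part is negligible. The outgoing part localizes at $\tau_{c_0}(x)=t-t_*$. At $t_R^+$ this gives action $R$. The Husimi angle comes from $S_x=\vartheta$ and the factor $e^{-in\varphi}$, which places the outgoing wave at angle $\varphi^+_{l,R}$. Applying $F_-$ to $U F_+^*e_j$ gives source amplitudes $(S_{\rm q}(U))_{lj}$. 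Distinct rays are separated in phase space, so cross terms vanish in the Husimi limit and the weights are $|(S_{\rm q}(U))_{lj}|^2$. For general $0<s<2T(c_0)$, I propagate each outgoing ray packet from action $R$ inward by Step 1 applied to sources, which yields Eq.~\eqref{eq:finite-post-escape-limit}.

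The main obstacle is Step 2. It requires making the approximate solution satisfy the extension boundary condition exactly while keeping the residual square summable, uniformly in $c$. It also requires matching the full norm of the Gaussian packet with the tail packet; without full-norm matching, weak Husimi limits of total mass $<1$ would escape control. I would first try to control the residual near infinity through the decay $b_m^{-1/2}\sim m^{-k/4}$ of the amplitudes together with $\sum b_m^{-1}<\infty$.
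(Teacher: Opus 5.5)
Your plan is correct and is essentially the paper's argument: action-lattice Gaussian propagation with finitely supported cut-offs before escape (Lemma~\ref{lem:compact-action-gaussian}), finite-order WKB parametrices built on $S$ with exactly prescribed boundary values so that the energy packet lies in $D(\widehat H_{k,\hbar}^U)$, Duhamel on $[t_R^-,t_R^+]$, full-norm matching at $t_R^-$, the residue-to-ray transform giving $F_-UF_+^*e_j$, and separated-packet Husimi limits. The points you leave open are settled in the paper more simply than you suggest:
\begin{itemize}
\item No correction is needed to make the boundary values exact, because $\Gamma_{\pm,\hbar}$ read only the leading asymptotics and $S-\pi x/(2k)\to0$. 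Note also that it is $\Gamma_{-,\hbar}$, not the coefficient of $e^{-iS/\hbar}$, that is proportional to $F_+^*e_j$; the two differ by the residue phases $e^{i\pi(r+1/2)/(2k)}$.
\item The residual is $O(\hbar^{L+2}x^{\alpha/2-L-2})$, because the recurrence coefficients grow like $x^\alpha$. It becomes square summable by taking $L>(\alpha-3)/2$ transport corrections, not from the amplitude decay alone.
\item The full-norm matching is an explicit unitary quadratic-phase transform in the rescaled energy, which uses $S_{xc}\neq0$. The $\omega_R$ commutator is $O(\hbar^\infty)$ on $[t_R^-,t_R^+]$, so no spatial gluing to the interior propagation is required.
\end{itemize}
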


The proof uses finitely many approximate WKB transport corrections with exact boundary values. Square-summability of their residual and full-norm Gaussian matching suffice for Duhamel's formula. In the proof, $K$ is a compact interval with $c_0$ in its interior.

\begin{lemma}[Compact-orbit Gaussian propagation]
\label{lem:compact-action-gaussian}
On a fixed compact time interval on which a nonzero-energy classical orbit remains finite, the exact quantum evolution of a coherent state is approximated in norm, with error $o(1)$, by a sampled Gaussian on the action lattice. The same assertion holds for an initially sampled Gaussian with fixed positive width. It holds for every extension $U$.
\end{lemma}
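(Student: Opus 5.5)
The plan is to show that, while the classical orbit stays in a bounded region, the quantum dynamics never sees the boundary at infinity. We compare the exact evolution with a Gaussian ansatz built from the classical flow, and the extension parameter $U$ then drops out.

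\emph{Step 1: Gaussian ansatz.} Let $(q(t),p(t))$ be the classical orbit through $\rho_0$ on $[0,t_1]$. Assume it stays in a compact set of the energy component, with action in $[I_{\min},I_{\max}]$. If the orbit passes near the origin, $I_{\min}$ may be small; since $c_0\ne0$ the orbit avoids $0$ itself.

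We form the standard Hagedorn/Heller wave packet $\phi_\hbar(t)$. It is centered at the classical point, carries the classical action phase, and has a width matrix evolved by the linearized flow (the symplectic Riccati equation). Its Fock coefficients are $\langle n|\phi_\hbar(t)\rangle$.

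Two facts about this packet are used:
\begin{itemize}
\item[(i)] Up to $O(\hbar^\infty)$ in $\ell^2$, the coefficients are concentrated on $|x_n-I(t)|\le\hbar^{1/2-\epsilon}$.
\item[(ii)] In each residue sector they are a sampled Gaussian in $x$ on the lattice $x_{r,m}$, with smooth complex phase. This is the asserted description, obtained by stationary phase in the Bargmann representation.
\end{itemize}

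\emph{Step 2: Consistency with the Hamiltonian.} Only finitely supported sequences (up to tails) are involved. The tail of $\phi_\hbar(t)$ beyond action $2I_{\max}$ is Gaussian-small, so we replace $\phi_\hbar(t)$ by its truncation $\tilde\phi_\hbar(t)$ to $x_n\le 2I_{\max}$. This truncation lies in the domain of the minimal operator $\overline{\widehat H_{k,\hbar}}$, which is contained in every $D(\widehat H^U_{k,\hbar})$.

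On this domain $\widehat H^U_{k,\hbar}$ acts as the Weyl polynomial $\widehat H_{k,\hbar}$. The standard Taylor expansion of $H_k$ to third order around $(q(t),p(t))$ then gives the residual estimate
\begin{equation*}
 \bigl\|i\hbar\partial_t\tilde\phi_\hbar-\widehat H_{k,\hbar}\tilde\phi_\hbar\bigr\|=O(\hbar^{3/2}),
\end{equation*}
uniformly on $[0,t_1]$. The cut-off contributes only $O(\hbar^\infty)$.

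\emph{Step 3: Duhamel.} By Duhamel's formula with the unitary group $e^{-it\widehat H^U/\hbar}$,
\begin{equation*}
 \|\psi_\hbar(t)-\tilde\phi_\hbar(t)\|\le \hbar^{-1}\int_0^t O(\hbar^{3/2})\,ds=O(\hbar^{1/2}).
\end{equation*}
This holds for every $U$, because the ansatz lies in the minimal domain. The same argument applies to an initially sampled Gaussian of fixed positive width. Fixed width means a width matrix of order $\hbar^{1/2}$ in phase space, so the packet is again a Hagedorn state, possibly squeezed, and Steps 1--3 go through unchanged.

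\emph{Main obstacle.} The hardest part is the identification in (ii). We must show that the Fock coefficients of a phase-space Gaussian are, in norm up to $o(1)$, a sampled Gaussian in action on each sector lattice. This needs uniform Stirling asymptotics for $\bar\xi^n/\sqrt{n!}$ near $n\approx I/\hbar$, with remainders that are uniform on the compact time interval.

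A simpler route may avoid this. One can work directly with the discrete recurrence: write the ansatz as $e^{iS/\hbar}$ times a Gaussian amplitude on $x_{r,m}$, and verify the residual by Taylor-expanding $A_\hbar(x\pm\delta/2)$ from Eq.~\eqref{eq:midpoint-jacobi}. The resulting residual is $O(\hbar^{3/2})$ in $\ell^2$, provided the action stays bounded away from $0$ and $\infty$.

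The difficulty is the angular coordinate near the origin: a nonzero-energy orbit may pass close to $I=0$, where the action-lattice description degenerates. In that case I would cover the time interval by finitely many subintervals. On subintervals where the action is small, I would use the Cartesian Hagedorn ansatz, and switch to the action-lattice description elsewhere, comparing the two forms in norm on the overlaps.
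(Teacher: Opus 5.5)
Your argument is correct but follows a different route from the paper.

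\textbf{What your route does.} You propagate a Cartesian Hagedorn packet using the standard Weyl-calculus residual $O(\hbar^{3/2})$ for the polynomial symbol. This is valid for every $U$, because your truncated packets lie in the minimal domain. It has two advantages: exact initial data, so propagation alone costs only $O(\sqrt\hbar)$, and no use of action--angle coordinates during propagation. All model-specific work is pushed into step (ii): converting a \emph{general squeezed} coherent state into a sampled action Gaussian. That step needs a Bargmann saddle-point analysis with arbitrary complex width. You flag it but do not carry it out. It does go through, and the paper's trick reduces full-norm convergence to local uniform asymptotics in bounded rescaled windows: combine local convergence with the fact that both norms tend to one.

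\textbf{What the paper does.} It converts only once, in the easiest case. Stirling's formula applied to the Poisson coefficients of the initial coherent state gives $Z(0)=i/(2I_0)$, with an $o(1)$ initial error. It then propagates intrinsically on the action lattice, which is exactly your ``simpler route.'' $Z(t)$ evolves by the Riccati equation of the linearized action--angle flow, and the residual is checked directly in the difference equation using the midpoint coefficient $A_\hbar$. This yields the lattice-form widths needed later for tail matching, and it makes the ``initially sampled Gaussian'' case immediate.

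\textbf{The sampled-Gaussian case in your route.} Here you need the converse of (ii). A sampled action Gaussian is not literally a Hagedorn state; it is only $o(1)$-close in norm to one. You also need every $Z$ with $\Im Z>0$ to arise from some Cartesian width. This holds because the action--angle map is a local symplectomorphism away from the origin and linear symplectic maps act transitively on widths.

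\textbf{Your worry about the origin.} This concern is unfounded. On $H_k=c_0$ one has $2\lambda I^{k/2}\ge|c_0|$, so
\begin{equation*}
 I\ge\bigl(|c_0|/(2\lambda)\bigr)^{2/k}>0
\end{equation*}
along the entire orbit. In any case, a fixed compact orbit segment avoiding $0$ has a fixed positive minimal action, and that is all that matters as $\hbar\downarrow0$. So the direct lattice construction works on the whole interval, and no patching with Cartesian packets on subintervals is needed. In your own main route the action lower bound is needed only at the observation time.
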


\begin{proof}
On the full number lattice in Eq.~\eqref{eq:midpoint-action}, the Jacobi expression shifts $n$ by $\pm k$ and has midpoint coefficient $A_\hbar$. Its principal symbol is the action--angle Hamiltonian in Eq.~\eqref{eq:classicalHam}; in this proof write $\mathcal H\equiv H_k$ in those coordinates. Let $(I(t),\varphi(t))$ be the classical orbit in the action--angle variables introduced above, with $\varphi$ lifted continuously to $\mathbb R$. Here $I$ is bounded above and bounded away from zero. Consider
\begin{equation}
 G_{\hbar,n}(t)\equiv\hbar^{1/4}\mathcal N(t)
 \exp\left(\frac{i}{\hbar}
 [\mathcal A(t)+\varphi(t)(x_n-I(t))
                  +Z(t)(x_n-I(t))^2/2]\right),
 \label{eq:sampled-gaussian}
\end{equation}
Here $\mathcal A(t)$ is the classical action accumulated along the orbit and contributes a common scalar phase, whereas the complex parameter $Z(t)$ controls the Gaussian width through $\Im Z(t)>0$ and its quadratic phase through $\Re Z(t)$. The factor $\mathcal N(t)$ is the corresponding normalization, with $|\mathcal N|=(\Im Z/\pi)^{1/4}$. All Hamiltonian derivatives below are evaluated on the orbit:
\begin{align}
 \dot I&=\mathcal H_\vartheta,&
 \dot\varphi&=-\mathcal H_x,&
 \dot{\mathcal A}&=\varphi\dot I-\mathcal H,\nonumber\\
 \dot Z&=-\mathcal H_{xx}-2\mathcal H_{x\vartheta}Z
                         -\mathcal H_{\vartheta\vartheta}Z^2,&
 \frac{\dot{\mathcal N}}{\mathcal N}
 &=-\tfrac12(\mathcal H_{x\vartheta}
                         +\mathcal H_{\vartheta\vartheta}Z).
 \label{eq:action-gaussian-transport}
\end{align}
Let $M_{ij}(t)$ denote the entries of the real symplectic linearized flow matrix. Then
$Z(t)=(M_{21}+M_{22}Z(0))/(M_{11}+M_{12}Z(0))$ and
$\Im Z(t)=\Im Z(0)/|M_{11}+M_{12}Z(0)|^2>0$.
The denominator cannot vanish. Thus the widths stay uniformly positive and bounded on the compact interval, and the normalization condition is preserved.

Here the residual is checked directly in the difference equation. With the local rescaled variable $y\equiv(x-I)/\sqrt\hbar$, the ratio of the shifted Gaussian to the original is
$e^{i\varepsilon k\varphi+i\varepsilon k\sqrt\hbar Zy+ik^2\hbar Z/2}$.
Expansion of the operator through order $\hbar$, divided by the Gaussian, gives
\begin{align*}
 \mathcal H
 &+\sqrt\hbar(\mathcal H_x+\mathcal H_\vartheta Z)y\\
 &+\frac{\hbar}{2}
 (\mathcal H_{xx}+2\mathcal H_{x\vartheta}Z
                       +\mathcal H_{\vartheta\vartheta}Z^2)y^2
 -\frac{i\hbar}{2}
       (\mathcal H_{x\vartheta}+\mathcal H_{\vartheta\vartheta}Z).
\end{align*}
Eq.~\eqref{eq:action-gaussian-transport} cancels these terms in $i\hbar\partial_tG$. Taylor's integral remainder, multiplied by the Gaussian, is bounded by $C\hbar^{7/4}(1+|y|^M)e^{-\beta y^2}$ on a fixed compact action neighborhood, for constants $C,\beta>0$ independent of $\hbar$. The shifted Gaussian factors $e^{C\sqrt\hbar|y|}$ are absorbed in this bound. Multiplication by a fixed smooth action cutoff equal to one near the compact orbit makes the sequence finitely supported and adds $O(\hbar^\infty)$ to the residual. In the rest of the proof, $G_\hbar$ denotes this cut-off sequence, which belongs to the minimal domain of every extension. With
\begin{equation*}
 R_\hbar(t)\equiv\bigl(i\hbar\partial_t-\widehat H_{k,\hbar}^{U}\bigr)G_\hbar(t),
\end{equation*}
summing squares on the $y$ mesh $\sqrt\hbar$ gives, uniformly on the fixed time interval,
\begin{equation*}
 \|R_\hbar(t)\|_{\ell^2}=O(\hbar^{3/2}).
\end{equation*}
If $\psi_\hbar^G(t)$ is the exact solution with $\psi_\hbar^G(0)=G_\hbar(0)$, Duhamel's formula gives, for either sign of $t$ in the fixed compact interval containing zero,
\begin{equation*}
 \|\psi_\hbar^G(t)-G_\hbar(t)\|_{\ell^2}
 \le\frac{|t|}{\hbar}
 \sup_{s\in[\min(0,t),\,\max(0,t)]}\|R_\hbar(s)\|_{\ell^2}
 =O(\sqrt\hbar).
\end{equation*}
Initial matching is only $o_{\ell^2}(1)$, so the lemma's total error is $o(1)$, not necessarily $O(\sqrt\hbar)$.

For the initial coherent state, let $I_0\equiv I(\rho_0)>0$ and $\varphi_0\equiv\varphi(\rho_0)$. Stirling's formula gives
\begin{equation}
 \langle n\mid\rho_0,\hbar\rangle=
 \hbar^{1/4}e^{in\varphi_0}(2\pi I_0)^{-1/4}
 \exp\left(-\frac{(x_n-I_0)^2}{4\hbar I_0}\right)
 +o_{\ell^2}(1).
 \label{eq:coherent-poisson-gaussian}
\end{equation}
It is uniform on each bounded rescaled action interval. To see full-norm convergence, use the exact norm one of the coherent state, the norm tending to one of the sampled normalized Gaussian, and local Riemann summation; then let the rescaled interval grow. This matches Eq.~\eqref{eq:sampled-gaussian} with $Z(0)=i/(2I_0)$, $\mathcal A(0)=I_0\varphi_0$, and $\mathcal N(0)=e^{-i\varphi_0/2}(2\pi I_0)^{-1/4}$.
\end{proof}

\begin{lemma}[Endpoint-normalized approximate modes]
\label{lem:approximate-tail-modes}
Choose $R$ so large that $|c|<2\lambda(R/4)^\alpha$ on a neighborhood of $K$. For every integer $L\ge0$ there are smooth amplitudes $\mathfrak b_{\sigma,\ell}(x,c)$, $0\le\ell\le L$, on $x\ge R/4$, $c\in K$, with
\begin{equation}
 \mathfrak b_{\sigma,0}\equiv v^{-1/2},\qquad
 |\partial_x^a\partial_c^b\mathfrak b_{\sigma,\ell}|
 \le C_{a,b,\ell,R}x^{-\alpha/2-\ell-a}.
 \label{eq:transport-symbol-bounds}
\end{equation}
Define
\begin{equation}
 W_{r,m}^{\sigma,L}(c;\hbar)\equiv
 e^{-i\sigma\pi(r+1/2)/(2k)}
 e^{i\sigma S(x_{r,m},c)/\hbar}
 \sum_{\ell=0}^{L}\hbar^\ell\mathfrak b_{\sigma,\ell}(x_{r,m},c).
 \label{eq:approximate-tail-mode}
\end{equation}
Unlike the exact tail solutions $w_r^{\pm}(z)$ in Theorem~\ref{thm:WKB-boundary-coordinates}, the sequences $W_r^{\sigma,L}(c;\hbar)$ are finite-order semiclassical parametrices: they solve the tail recurrence only up to the residual estimated below, while providing estimates uniform in the semiclassical regime.
For sufficiently small $\hbar$ and $x_{r,m}\ge R/3$, its tail residual satisfies
\begin{equation}
 |((J_{r,\hbar}-c)W^{\sigma,L})_m|
 \le C_{L,R}\hbar^{L+2}x_{r,m}^{\alpha/2-L-2}.
 \label{eq:tail-residual-pointwise}
\end{equation}
If $L>(\alpha-3)/2$, this residual is in $\ell^2$ on $x_{r,m}\ge R/3$, with norm $O_R(\hbar^{L+3/2})$. For such $L$, choose a smooth cutoff $0\le\omega_R\le1$ vanishing for $x\le R/2$ and equal to one for $x\ge3R/4$. Then $\omega_RW^{\sigma,L}$ belongs to the maximal domain, and
\begin{equation}
 \Gamma_{\sigma,\hbar}(\omega_RW_r^{\sigma,L})=e_r,\qquad
 \Gamma_{-\sigma,\hbar}(\omega_RW_r^{\sigma,L})=0.
 \label{eq:approximate-mode-boundary}
\end{equation}
The sign subscript on $\Gamma$ means $+$ or $-$.
\end{lemma}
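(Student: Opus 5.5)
The plan is a standard Liouville--Green (WKB) construction for a three-term difference equation. The new points are the bookkeeping of the discrete shift $x\mapsto x\pm\delta$ with $\delta=k\hbar$, and the identification of boundary values through the asymptotics of Theorem~\ref{thm:WKB-boundary-coordinates}.

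\emph{Residual computation.} Write $W_m=e^{i\sigma S(x)/\hbar}\mathfrak b(x)$ with $x=x_{r,m}$ and $\mathfrak b=\sum_{\ell\le L}\hbar^\ell\mathfrak b_{\sigma,\ell}$. By Eq.~\eqref{eq:midpoint-jacobi} the recurrence reads
\begin{equation*}
 A_\hbar(x+\delta/2)W(x+\delta)+A_\hbar(x-\delta/2)W(x-\delta)-cW(x)=0 .
\end{equation*}
Dividing by $e^{i\sigma S(x)/\hbar}$, I Taylor expand the phase difference:
\begin{equation*}
 \frac{S(x\pm\delta)-S(x)}{\hbar}=\pm kS_x+\frac{k^2\hbar}{2}S_{xx}\pm\cdots .
\end{equation*}
I also expand $A_\hbar(x\pm\delta/2)$ and $\mathfrak b(x\pm\delta)$ in powers of $\hbar$. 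Since $S_x=\vartheta$, we have $kS_x=\theta$.

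At order $\hbar^0$ the equation becomes $2\lambda x^\alpha\cos\theta=c$, which holds exactly by Eq.~\eqref{eq:tail-phase}. Using $\sin\theta\,\partial_x\theta$, the order-$\hbar$ term is $i\sigma k$ times a first-order transport operator applied to $\mathfrak b_0$, namely
\begin{equation*}
 v\,\partial_x\mathfrak b_0+\tfrac12(\partial_x v)\,\mathfrak b_0 .
\end{equation*}
This holds because the combination $A(x+\delta/2)e^{i\sigma\theta}-A(x-\delta/2)e^{-i\sigma\theta}$ differentiates to $v$. Hence $\mathfrak b_0=v^{-1/2}$.

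The higher $\mathfrak b_{\sigma,\ell}$ solve inhomogeneous transport equations
\begin{equation*}
 v\,\partial_x\mathfrak b_\ell+\tfrac12 v_x\mathfrak b_\ell=F_\ell ,
\end{equation*}
where $F_\ell$ is built from $\mathfrak b_0,\dots,\mathfrak b_{\ell-1}$ and derivatives of $S$ and $A_\hbar$. Each extra $\hbar$ comes with either one extra $x$-derivative or with $\hbar^2/x^2$ from Eq.~\eqref{eq:midpoint-jacobi}, so it gains a factor $x^{-1}$. I integrate from $+\infty$ with zero data, $\mathfrak b_\ell=-v^{-1/2}\int_x^\infty v^{-1/2}F_\ell$. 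This works because $v\sim2\lambda x^\alpha$, $F_\ell=O(x^{\alpha/2-\ell-1})$, and therefore $\mathfrak b_\ell=O(x^{-\alpha/2-\ell})$. This gives Eq.~\eqref{eq:transport-symbol-bounds}. The $c$-derivatives are harmless because all symbols are smooth in $c$ on $K$ once $|c|<2\lambda(R/4)^\alpha$, which keeps $\sin\theta$ bounded below.

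Truncating at order $L$, the Taylor remainders, taken with integral remainder over $[x-\delta,x+\delta]\subset[R/4,\infty)$, leave a term of order $\hbar^{L+2}$ with symbol weight $x^{\alpha}\cdot x^{-\alpha/2-L-2}$. This gives Eq.~\eqref{eq:tail-residual-pointwise}. The main obstacle is this uniform bookkeeping: every term must gain exactly one power of $x^{-1}$ per power of $\hbar$, uniformly on the lattice, including the phase-expansion terms $\hbar^{j-1}S^{(j)}$, which are $O(\hbar^{j-1}x^{1-j})$ since $S_x$ has symbol order $0$. I would organize this with a symbol class $x^{\mu}$ under $\partial_x\mapsto x^{-1}$ and prove stability of the class under the shift expansion.

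\emph{$\ell^2$ bound.} The squared residual is $\hbar^{2L+4}\sum_m x_m^{\alpha-2L-4}$, a Riemann sum with mesh $\delta$, hence bounded by
\begin{equation*}
 \hbar^{2L+3}\int_{R/3}^\infty x^{\alpha-2L-4}\,dx .
\end{equation*}
This is finite when $\alpha-2L-4<-1$, i.e.\ $L>(\alpha-3)/2$, and gives the norm $O_R(\hbar^{L+3/2})$.

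\emph{Maximal domain and boundary values.} $W$ is $O(x^{-\alpha/2})$ and $\alpha>1$, so $\omega_RW\in\ell^2$. Next, $J(\omega_RW)=\omega_RJW+[J,\omega_R]W$, where the commutator is finitely supported and $JW=cW+\text{residual}\in\ell^2$. Hence $\omega_RW\in D(\widehat H^*)$.

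For the boundary coefficients, as $m\to\infty$ with $\hbar$ fixed, $S(x,c)=\pi x/(2k)+O(x^{1-\alpha})\to\pi x/(2k)$. Thus
\begin{equation*}
 e^{i\sigma S(x_{r,m})/\hbar}=e^{i\sigma\pi(r+1/2)/(2k)}\,e^{i\sigma\pi m/2}\,(1+o(1)),
\end{equation*}
and the prefactor in Eq.~\eqref{eq:approximate-tail-mode} cancels the constant phase. The amplitude satisfies $\mathfrak b\sim v^{-1/2}\sim(2\lambda x^\alpha)^{-1/2}$. Comparing with $(b_m)^{-1/2}\sim(\lambda x^\alpha)^{-1/2}$ gives the factor $2^{-1/2}$. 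Therefore $s_{r,\sigma}=1/\sqrt2$ and $s_{r,-\sigma}=0$ by uniqueness in Eq.~\eqref{eq:maximal-domain-asymptotics}, and the $\sqrt2$ in Eq.~\eqref{eq:boundary-maps} gives Eq.~\eqref{eq:approximate-mode-boundary}. Only the leading term matters here, since the terms with $\ell\ge1$ are $o(x^{-\alpha/2})$. I must also verify that $x^\alpha$ versus $b_m$ differ by $1+O(\hbar^2/x^2)$ so the limit is exact, which Eq.~\eqref{eq:midpoint-jacobi} supplies.
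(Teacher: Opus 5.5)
Your proposal is correct and follows essentially the same route as the paper. You conjugate the recurrence by $e^{i\sigma S/\hbar}$, expand in $\hbar$ within a symbol class where each power of $\hbar$ costs one power of $x^{-1}$, solve $v\,\partial_x\mathfrak b_{\sigma,\ell}+\tfrac12 v_x\mathfrak b_{\sigma,\ell}=F_\ell$ by integrating from infinity (your formula coincides with the paper's recursion), bound the Taylor remainder by $\hbar^{L+2}x^{\alpha/2-L-2}$, and read off the boundary values from $W_m\sim e^{i\sigma\pi m/2}(2b_m)^{-1/2}$. The only blemish is in your heuristic for the first transport operator: it comes from $\lambda x^\alpha(e^{i\sigma\theta}-e^{-i\sigma\theta})=i\sigma v$ together with $\tfrac12 v_x=\lambda\alpha x^{\alpha-1}\sin\theta+\lambda x^\alpha\cos\theta\,\theta_x$, not from a combination that ``differentiates to $v$'' or a $\sin\theta\,\partial_x\theta$ term; the operator you state is nevertheless the right one.
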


\begin{proof}
Let $\mathcal S_{\rm tail}^{\mu}$ consist of functions satisfying $|\partial_x^a\partial_c^b g|\le C_{a,b}x^{\mu-a}$ on this tail. The coefficient $A_\hbar$ is $x^\alpha$ times a smooth function of $\hbar/x$, and the arguments of the inverse trigonometric functions stay inside a fixed compact subinterval of $(-1,1)$. All their indicated derivatives have the corresponding symbol bounds.

For a smooth amplitude $g$, the conjugated recurrence is exactly
\begin{equation}
 \mathcal T_{\sigma,\hbar}g\equiv
 \sum_{\varepsilon=\pm1}A_\hbar(x+\varepsilon k\hbar/2)
 \exp\left(i\sigma\varepsilon k\int_0^1\vartheta(x+s\varepsilon k\hbar,c)\,ds\right)
 g(x+\varepsilon k\hbar)-cg(x).
 \label{eq:conjugated-recurrence}
\end{equation}
Denote its Taylor coefficient at order $\hbar^d$ by $\mathcal T_{\sigma}^{[d]}$; the bracketed index distinguishes this coefficient from the full operator $\mathcal T_{\sigma,\hbar}$. Taylor's integral remainder gives
\begin{equation}
 \mathcal T_{\sigma,\hbar}g=
 \sum_{d=1}^{M}\hbar^d\mathcal T_{\sigma}^{[d]}g+
 \hbar^{M+1}\mathcal R_{\sigma,M,\hbar}g,\qquad
 \mathcal T_{\sigma}^{[d]}:\mathcal S_{\rm tail}^{\mu}
       \longrightarrow\mathcal S_{\rm tail}^{\mu+\alpha-d}.
 \label{eq:transport-expansion}
\end{equation}
The remainder has the same mapping property with $d=M+1$, uniformly for small $\hbar$, including any fixed $x,c$ derivatives. Indeed, each $\hbar$ derivative of the coefficient or exponential in Eq.~\eqref{eq:conjugated-recurrence} lowers its $x$ order by one. A derivative of the shifted amplitude does the same, and all shifted arguments are comparable to $x$. This proves the remainder assertion directly. The zeroth coefficient vanishes by $2\lambda x^\alpha\cos(k\vartheta)=c$. The first coefficient is
\begin{equation}
 \mathcal T_{\sigma}^{[1]}g=i\sigma k(vg_x+\tfrac12v_xg).
 \label{eq:first-transport-operator}
\end{equation}
Starting with the leading amplitude in Eq.~\eqref{eq:transport-symbol-bounds}, recursively define, for $\ell\ge1$,
\begin{align}
 \mathcal E_{\sigma,\ell}
 &\equiv\sum_{d=2}^{\ell+1}\mathcal T_{\sigma}^{[d]}
                 \mathfrak b_{\sigma,\ell+1-d},\nonumber\\
 \mathfrak b_{\sigma,\ell}(x,c)
 &\equiv\frac{\mathfrak b_{\sigma,0}(x,c)}{i\sigma k}
   \int_x^\infty
   \frac{\mathcal E_{\sigma,\ell}(y,c)}
        {v(y,c)\mathfrak b_{\sigma,0}(y,c)}\,dy.
 \label{eq:transport-recursion}
\end{align}
Inductively $\mathcal E_{\sigma,\ell}\in\mathcal S_{\rm tail}^{\alpha/2-\ell-1}$, so the integrand is in $\mathcal S_{\rm tail}^{-\ell-1}$. It is integrable with all energy derivatives. This proves Eq.~\eqref{eq:transport-symbol-bounds}. Differentiating the integral gives $\mathcal T_{\sigma}^{[1]}\mathfrak b_{\sigma,\ell}=-\mathcal E_{\sigma,\ell}$. Thus the terms of orders $1,\ldots,L+1$ cancel, and the Taylor remainder proves Eq.~\eqref{eq:tail-residual-pointwise}. Squared summation on the mesh $k\hbar$ gives
\begin{equation*}
 \sum_{x_{r,m}\ge R/3}|((J_{r,\hbar}-c)W^{\sigma,L})_m|^2
 \le C_R\hbar^{2L+3}\int_{R/4}^\infty x^{\alpha-2L-4}\,dx.
\end{equation*}
The integral converges under the stated condition on $L$. Also $W^{\sigma,L}\in\ell^2$ since $\alpha>1$. After multiplication by $\omega_R$, the extra commutator has finite support at each fixed $\hbar$, proving maximal-domain membership. At fixed $\hbar$ and $m\to\infty$,
\begin{equation*}
 W_{r,m}^{\sigma,L}=
 \frac{e^{i\sigma\pi m/2}}{\sqrt{2b_{m,\hbar}^{(r)}}}(1+o(1)).
\end{equation*}
Thus, with the normalization of the exact modes in Eq.~\eqref{eq:WKB-solutions},
\begin{equation*}
 W_{r,m}^{\sigma,L}(c;\hbar)
 =2^{-1/2}w_{r,m}^{\sigma}(c)(1+o(1)),
 \qquad m\to\infty.
\end{equation*}
This identifies their leading tail asymptotics and is not an exact equality of solutions.
Indeed $S-\pi x/(2k)\to0$, each correction ratio tends to zero, and the residue phase cancels the half-unit action shift exactly. Eq.~\eqref{eq:boundary-maps} proves Eq.~\eqref{eq:approximate-mode-boundary}.
\end{proof}

\begin{lemma}[Propagation of a tail packet]
\label{lem:tail-packet-propagation}
Let $f$ be a normalized centered complex Gaussian on $\mathbb R$, let $\zeta\in C_c^\infty(K)$ equal one near $c_0$, and put
\begin{equation}
 f_\hbar(c)\equiv\hbar^{-1/4}\zeta(c)
                 f\left(\frac{c-c_0}{\sqrt\hbar}\right).
 \label{eq:gaussian-energy-profile}
\end{equation}
In the proof of Theorem~\ref{thm:first-passage}, Lemma~\ref{lem:leading-gaussian-matching} selects the complex width of $f$ from the incoming sampled Gaussian and the derivatives of the eikonal $S$. Thus $f_\hbar$ is an auxiliary profile constructed from the prescribed coherent initial state, not an additional datum in the statement of the theorem.
For $\mathbf v\in\mathbb C^k$ and $L>(\alpha-3)/2$, define
\begin{equation}
 \Psi_\hbar(t)_{r,m}\equiv\frac{\omega_R(x_{r,m})}{\sqrt{2\pi}}
 \int_K e^{-ic(t-t_*)/\hbar}f_\hbar(c)
 \left[(U\mathbf v)_rW_{r,m}^{+,L}(c;\hbar)
               +v_rW_{r,m}^{-,L}(c;\hbar)\right]dc.
 \label{eq:approximate-tail-packet}
\end{equation}
Then $\Psi_\hbar(t)\in D(\widehat H_{k,\hbar}^U)$, is differentiable in Hilbert norm, and
\begin{equation}
 \sup_{t_R^-\le t\le t_R^+}
 \left\|e^{-i(t-t_R^-)\widehat H_{k,\hbar}^U/\hbar}
       \Psi_\hbar(t_R^-)-\Psi_\hbar(t)\right\|
 =O_R(\hbar^{L+3/4})+O_R(\hbar^\infty).
 \label{eq:tail-packet-duhamel}
\end{equation}
The estimates are uniform for $\mathbf v$ in a bounded set.
\end{lemma}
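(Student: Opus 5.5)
The plan has three steps: show that $\Psi_\hbar(t)$ lies in the domain of $\widehat H_{k,\hbar}^U$, compute the residual $R_\hbar(t)\equiv(i\hbar\partial_t-\widehat H^U_{k,\hbar})\Psi_\hbar(t)$, and then apply Duhamel's formula exactly as in Lemma~\ref{lem:compact-action-gaussian}.

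\textbf{Domain membership.} For fixed $\hbar$, the integrand in Eq.~\eqref{eq:approximate-tail-packet} is $\omega_R$ times a combination of the modes $W^{\pm,L}$. The amplitudes $\mathfrak b_{\sigma,\ell}$ are smooth in $c$ with the bounds of Eq.~\eqref{eq:transport-symbol-bounds}, and $f_\hbar$ is compactly supported. Hence the integral in $c$ commutes both with the maximal operator and with the boundary maps. For the latter we use dominated convergence for the tail asymptotics, which are uniform in $c\in K$ since $S-\pi x/(2k)\to0$ uniformly.

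By Eq.~\eqref{eq:approximate-mode-boundary} and linearity,
\[
\Gamma_{+,\hbar}\Psi_\hbar(t)=\frac{1}{\sqrt{2\pi}}\int_K e^{-ic(t-t_*)/\hbar}f_\hbar(c)\,dc\;U\mathbf v,
\qquad
\Gamma_{-,\hbar}\Psi_\hbar(t)=\frac{1}{\sqrt{2\pi}}\int_K e^{-ic(t-t_*)/\hbar}f_\hbar(c)\,dc\;\mathbf v.
\]
The scalar prefactor is common to both, so $\Gamma_+=U\Gamma_-$, and Eq.~\eqref{eq:WKB-extension-domain} gives $\Psi_\hbar(t)\in D(\widehat H^U_{k,\hbar})$.

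Differentiability in norm follows from differentiation under the integral. The factor $c$ is bounded on $K$, and the integrand is dominated in $\ell^2$ by $\sup_{c}\|W^{\sigma,L}(c)\|_{\ell^2}<\infty$, which is finite because $\alpha>1$.

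\textbf{Residual.} The time derivative gives $i\hbar\partial_t\Psi_\hbar=\frac{1}{\sqrt{2\pi}}\int c\,(\cdots)\,dc$. It follows that
\[
R_\hbar(t)=\frac{1}{\sqrt{2\pi}}\int_K e^{-ic(t-t_*)/\hbar}f_\hbar(c)\,\bigl[\omega_R(c-J_r)W+[\omega_R,J_r]W\bigr]\,dc,
\]
with the combination $(U\mathbf v)_rW^{+}+v_rW^{-}$ inside. Since $\omega_R=1$ for $x\ge 3R/4$, the tail residual term is estimated by Lemma~\ref{lem:approximate-tail-modes}: its $\ell^2$ norm is $O_R(\hbar^{L+3/2})$ uniformly in $c$. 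Integrating against $|f_\hbar|$ costs at most $\|f_\hbar\|_{L^1}=O(\hbar^{1/4})$, so this part is $O(\hbar^{L+7/4})$.

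\textbf{Main obstacle: the commutator term.} The term $[\omega_R,J_r]W$ lives on the fixed action window $R/2\le x\le 3R/4$, where it is $O(1)$ pointwise. Smallness must come from stationary-phase cancellation in the $c$-integral. On that window, the phase of the integrand is $-c(t-t_*)/\hbar+\sigma S(x,c)/\hbar$. Its $c$-derivative is $-(t-t_*)+\sigma\tau_c(x)$ by Eq.~\eqref{eq:phase-time-identity}.

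For $t\in[t_R^-,t_R^+]$ we have $|t-t_*|\le\tau_{c_0}(R)$. Since $\tau_c$ is strictly decreasing, $\tau_c(x)\ge\tau_{c}(3R/4)>\tau_{c_0}(R)+\eta$ for $c$ near $c_0$ and $x\le 3R/4$, for some $\eta>0$ depending only on $R$. Thus this derivative is bounded away from zero.

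Away from $c_0$, i.e. where $|c-c_0|\ge\hbar^{1/2-\epsilon}$, the Gaussian profile $f_\hbar$ is already $O(\hbar^\infty)$. This may require shrinking $K$ around $c_0$, so that non-stationarity holds on all of $\operatorname{supp}\zeta$. Repeated integration by parts in $c$ then gains $\hbar^{1/2}$ per step: each $c$-derivative falling on $f_\hbar$ costs $\hbar^{-1/2}$, while the phase gives $\hbar$. Hence the window contribution is $O_R(\hbar^\infty)$ pointwise, uniformly in $t$ in that interval, and also in $\ell^2$ since the window contains $O(1/\hbar)$ sites.

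\textbf{Duhamel step.} Finally,
\[
\bigl\|e^{-i(t-t_R^-)\widehat H^U_{k,\hbar}/\hbar}\Psi_\hbar(t_R^-)-\Psi_\hbar(t)\bigr\|\le\hbar^{-1}\int_{t_R^-}^{t}\|R_\hbar(s)\|\,ds.
\]
The right-hand side is $O(\hbar^{L+3/4})+O(\hbar^\infty)$ over the bounded interval $[t_R^-,t_R^+]$. All constants depend on $\mathbf v$ only linearly, which gives the uniformity over bounded sets of $\mathbf v$.
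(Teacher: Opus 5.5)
Your proposal is correct and follows the paper's proof. Both pass the boundary relations through the energy integral, bound the tail residual by $O_R(\hbar^{L+3/2})\|f_\hbar\|_{L^1}=O_R(\hbar^{L+7/4})$, use non-stationary phase on the commutator window where $\tau_{c_0}(x)>\tau_{c_0}(R)$ by a fixed amount (counting $O_R(\hbar^{-1})$ sites), and conclude with Duhamel. The one difference is minor: the paper integrates by parts in $\eta=(c-c_0)/\sqrt\hbar$ after extracting the linear phase at $c_0$, which works on all of $K$ at once and gives the global decay bound reused in the matching lemma. In your version you should not shrink $K$, since $\zeta$ is given; instead localize to a fixed neighborhood of $c_0$ with a smooth cutoff, using that the Gaussian profile is exponentially small outside it.
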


\begin{proof}
At each fixed $\hbar$, the integrand is continuous in the maximal graph norm, by Lemma~\ref{lem:approximate-tail-modes}. The boundary maps are continuous in that norm, so Eq.~\eqref{eq:approximate-mode-boundary} passes through the integral and gives $\Gamma_{+,\hbar}\Psi_\hbar=U\Gamma_{-,\hbar}\Psi_\hbar$. Multiplication by $c$ under the compact energy integral justifies time differentiation. The tail residual is bounded uniformly in time by
\begin{equation}
 \|{\rm residual}_{\rm tail}\|
 \le C_R\hbar^{L+3/2}\|f_\hbar\|_{L^1(K)}
 \le C_R'\hbar^{L+7/4}.
 \label{eq:packet-tail-residual}
\end{equation}
The following Fourier estimate controls the commutator with $\omega_R$ and the global tails needed for matching. Put $s\equiv t-t_*$ and $\eta\equiv(c-c_0)/\sqrt\hbar$, and extract the linear energy oscillation $\exp(i\eta[\sigma\tau_{c_0}(x)-s]/\sqrt\hbar)$. The remaining energy phase is
\begin{equation*}
 \Phi_{\sigma,\hbar}(x,\eta)\equiv\frac{\sigma}{\hbar}
 [S(x,c_0+\sqrt\hbar\eta)-S(x,c_0)
                   -\sqrt\hbar\eta S_c(x,c_0)].
\end{equation*}
Taylor's integral formula gives $|\partial_\eta \Phi_{\sigma,\hbar}|\le C_R|\eta|$; all higher derivatives are polynomially bounded uniformly for $x\ge R/3$. After multiplication by $f$, the cutoff and the transport amplitude, any fixed number of $\eta$ derivatives has $L^1(d\eta)$ norm at most $C_{R,M}x^{-\alpha/2}$. Fourier integration by parts gives
\begin{equation}
 |\Psi_\hbar^\sigma(t)_{r,m}|
 \le C_{R,M}\hbar^{1/4}x_{r,m}^{-\alpha/2}
 \left(1+
 \frac{|\sigma\tau_{c_0}(x_{r,m})-s|}{\sqrt\hbar}\right)^{-M}.
 \label{eq:packet-global-decay}
\end{equation}
Here $\Psi^\sigma$ denotes the corresponding uncut branch, with its channel coefficient included in $C_{R,M}$.

On the commutator support, the action is between $R/2+O(\hbar)$ and $3R/4+O(\hbar)$, hence $\tau_{c_0}(x)>\tau_{c_0}(R)$ by a fixed amount. For $|s|\le\tau_{c_0}(R)$ both branches and their lattice shifts there are bounded by arbitrary powers of $\hbar$. There are $O_R(\hbar^{-1})$ sites with bounded recurrence coefficients, so the commutator residual is $O_R(\hbar^\infty)$ in norm. Duhamel and division of Eq.~\eqref{eq:packet-tail-residual} by $\hbar$ give Eq.~\eqref{eq:tail-packet-duhamel}.
\end{proof}

\begin{lemma}[Gaussian matching on the action lattice]
\label{lem:leading-gaussian-matching}
At $t_R^\sigma$, the matching branch of Eq.~\eqref{eq:approximate-tail-packet} has, up to a scalar phase independent of $r$, a full-norm approximation
\begin{equation}
 u_n=\hbar^{1/4}\sum_{l=0}^{k-1}d_l
 e^{in\varphi_{l,R}^{\sigma}(c_0)}
 g_\sigma\left(\frac{\hbar(n+1/2)-R}{\sqrt\hbar}\right)+o_{\ell^2}(1),
 \label{eq:lattice-gaussian-branches}
\end{equation}
where $\|g_\sigma\|_2=1$ and
\begin{equation}
 \mathbf d\equiv B_\sigma\mathbf v_\sigma,\qquad
 B_\sigma\equiv e^{-i\sigma\pi/(4k)}F_{-\sigma},\qquad
 \mathbf v_-\equiv\mathbf v,\quad \mathbf v_+\equiv U\mathbf v.
 \label{eq:finite-section-channel-matrix}
\end{equation}
The opposite branch is $o_{\ell^2}(1)$. Every normalized sampled Gaussian centered at $(R,\varphi_{j,R}^-(c_0))$ can be matched in full norm to $o(1)$ by choosing one Gaussian $f$ and $\mathbf v=B_-^*e_j$. Its outgoing coefficients are
\begin{equation}
 \mathbf d^{\rm out}\equiv B_+UB_-^*e_j
          =e^{-i\pi/(2k)}S_{\mathrm q}(U)e_j.
 \label{eq:matched-source-coefficients}
\end{equation}
\end{lemma}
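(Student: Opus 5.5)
The plan is a stationary-phase evaluation of the energy integral in Eq.~\eqref{eq:approximate-tail-packet} at $s=t_R^\sigma-t_*=\sigma\tau_{c_0}(R)$. This is followed by a Taylor expansion of the eikonal $S$ in the action variable about $R$, and then by a discrete Fourier resolution of the residue index into rays.

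\emph{Localization and energy integral.} Fix the branch $\sigma$ and its vector $\mathbf v_\sigma$. By the decay bound \eqref{eq:packet-global-decay}, the branch is concentrated in $\ell^2$ where $\sigma\tau_{c_0}(x)-s=O(\sqrt\hbar)$. At the matching branch this means $x-R=O(\sqrt\hbar)$, since $\tau_{c_0}$ is strictly monotone with derivative $-1/(kv)$. At the opposite branch the condition $-\sigma\tau_{c_0}(x)=\sigma\tau_{c_0}(R)$ has no solution, because $\tau>0$. The opposite branch is therefore $O(\hbar^\infty)$ pointwise with summable decay, which gives the $o_{\ell^2}(1)$ claim.

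On the matching branch I put $x=R+\sqrt\hbar y$ and $c=c_0+\sqrt\hbar\eta$. I Taylor-expand $S(x,c)/\hbar$ to second order jointly in $(x-R,c-c_0)$, using $S_x=\vartheta$, $S_c=\tau_c$ and $S_{xc}=-1/(kv)$. The linear $c$-term cancels against $e^{-ics/\hbar}$. The energy integral of $f(\eta)$ against $\exp(i\sigma[S_{cc}\eta^2/2-\eta y/(kv)])$ is then a Fourier transform of a Gaussian, hence a Gaussian $g_\sigma(y)$. Cubic remainders are $O(\sqrt\hbar\,\mathrm{poly})$ and vanish in $\ell^2$ after weighting. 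The amplitude factor $v(R,c_0)^{-1/2}$ together with the Jacobian $\sqrt\hbar$ yields the prefactor $\hbar^{1/4}$ and the unit norm of $g_\sigma$, by Plancherel and Riemann summation on the mesh $k\hbar$ in $y$ units.

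\emph{Residue-to-ray resolution.} The lattice phase is $e^{i\sigma S(x_n)/\hbar}$, whose linear part in $x$ is $e^{i\sigma\vartheta(R,c_0)(x_n-R)/\hbar}=e^{i\sigma\theta n/k}\times$ (a constant). Combined with the residue phase $e^{-i\sigma\pi(r+1/2)/(2k)}$ and the coefficient $(\mathbf v_\sigma)_r$, this must be rewritten as a sum over $l$ of $e^{in\varphi_{l,R}^\sigma}$. To do so I use the character identity $\mathbf 1_{n\equiv r}=k^{-1}\sum_l e^{2\pi il(n-r)/k}$. The resulting coefficients are $d_l=k^{-1/2}\sum_r e^{-2\pi ilr/k}e^{-i\sigma\pi r/(2k)}(\mathbf v_\sigma)_r$ times $e^{-i\sigma\pi/(4k)}$, and this is $e^{-i\sigma\pi/(4k)}(F_{-\sigma}\mathbf v_\sigma)_l$ by Eq.~\eqref{eq:F-plus-minus}. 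Every common factor not depending on $r$ goes into the permitted scalar phase. This bookkeeping of the $1/2$-shifts, $\pi/(2k)$ versus $\pi/(4k)$, is the most error-prone part, and I would check it against the decomposition $F_\pm=F_0^*\operatorname{diag}(e^{\pm i\pi r/(2k)})$.

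\emph{Matching and outgoing coefficients.} A normalized sampled Gaussian centered at $(R,\varphi^-_{j,R})$ is $\hbar^{1/4}e^{in\varphi_{j,R}^-}h((x_n-R)/\sqrt\hbar)$, with $h$ a complex Gaussian. By the previous step its coefficient vector is $e_j$, which requires $B_-\mathbf v=e_j$, that is $\mathbf v=B_-^*e_j$ by unitarity. The step I expect to be the main obstacle is producing $h$ exactly, up to phase. The map $f\mapsto g_-$ is a metaplectic-type Gaussian transform, namely a chirp multiplication followed by a rescaled Fourier transform with nonzero scaling $1/(kv)$. It is invertible on complex Gaussians with positive imaginary width. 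I would solve explicitly for the complex width of $f$ and check that it lies in the upper half plane, which follows from unitarity of the transform. Full-norm $o(1)$ matching then combines pointwise Gaussian convergence on bounded $y$-windows with the norms tending to one, as in the proof of Lemma~\ref{lem:compact-action-gaussian}. Finally, $\mathbf d^{\rm out}=B_+U\mathbf v=e^{-i\pi/(4k)}e^{-i\pi/(4k)}F_-UF_+^*e_j=e^{-i\pi/(2k)}S_{\rm q}(U)e_j$ by Eq.~\eqref{eq:channel-boundary-condition}.
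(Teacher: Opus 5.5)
Your proposal is correct and follows essentially the same route as the paper's proof. Both expand the phase to second order at $t_R^\sigma$, so that the linear energy term cancels and the resulting chirp--Fourier--dilation transform is unitary by Plancherel. Both control the tails by the decay bound \eqref{eq:packet-global-decay} and monotonicity of $\tau_{c_0}$, resolve residues into rays by the discrete Fourier matrix to get $B_\sigma=e^{-i\sigma\pi/(4k)}F_{-\sigma}$, and invert the transform explicitly on complex Gaussians. Your description of the map $f\mapsto g_\sigma$ omits the output chirp $e^{i\sigma S_{xx}(R,c_0)y^2/2}$ coming from your own joint Taylor expansion; it should be included in $g_\sigma$, but this affects neither the normalization nor the invertibility.
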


\begin{proof}
Use the energy variable $\eta$ from the preceding proof and put $y\equiv(x-R)/\sqrt\hbar$. At $t_R^\sigma$ the linear term in $\eta$ cancels. Define the real scalars
\begin{equation*}
 S_{xx}^{\sigma}\equiv\sigma S_{xx}(R,c_0),\qquad
 S_{xc}^{\sigma}\equiv\sigma S_{xc}(R,c_0)\ne0,\qquad
 S_{cc}^{\sigma}\equiv\sigma S_{cc}(R,c_0).
\end{equation*}
The remaining quadratic phase is
$S_{xx}^{\sigma} y^2/2+S_{xc}^{\sigma} y\eta+S_{cc}^{\sigma}\eta^2/2$.
Since $\mathfrak b_{\sigma,0}(R,c_0)/\sqrt{k}=\sqrt{|S_{xc}^{\sigma}|}$, the normalized leading transform is
\begin{equation}
 g_\sigma(y)\equiv\sqrt{\frac{|S_{xc}^{\sigma}|}{2\pi}}\,
 e^{iS_{xx}^{\sigma} y^2/2}
 \int_{\mathbb R}e^{iS_{xc}^{\sigma} y\eta+iS_{cc}^{\sigma}\eta^2/2}
 f(\eta)\,d\eta.
 \label{eq:quadratic-matching-transform}
\end{equation}
Plancherel's identity makes this transform unitary: its factors are quadratic-phase multipliers, a Fourier transform and a dilation. To make the inverse entirely explicit, for a desired normalized $g(y)\equiv\gamma\exp(-\kappa y^2/2)$, $\Re\kappa>0$, put
\begin{equation}
 \kappa_E\equiv iS_{cc}^{\sigma}+
       \frac{(S_{xc}^{\sigma})^2}{\kappa+iS_{xx}^{\sigma}},
 \qquad
 f(\eta)\equiv\frac{\gamma\sqrt{\kappa_E-iS_{cc}^{\sigma}}}
                  {\sqrt{|S_{xc}^{\sigma}|}}\exp(-\kappa_E\eta^2/2).
 \label{eq:explicit-energy-gaussian}
\end{equation}
Here $\Re\kappa_E=(S_{xc}^{\sigma})^2\Re\kappa/|\kappa+iS_{xx}^{\sigma}|^2>0$, and the square root is chosen with positive real part. The elementary Gaussian integral gives Eq.~\eqref{eq:quadratic-matching-transform} with output $g$ and $\|f\|_2=1$.

Taylor's formula in the energy integral proves the local asymptotic uniformly for bounded $y$, by dominated convergence with the Gaussian. To obtain full norm, use Eq.~\eqref{eq:packet-global-decay}. Near $R$, the quantity $|\tau_{c_0}(x)-\tau_{c_0}(R)|$ is comparable to $|x-R|$, so the rescaled envelopes are bounded by $C_M(1+|y|)^{-M}$. Outside a fixed neighborhood of $R$, and on the opposite branch everywhere, the energy phase derivative is separated from zero. Squared summation uses $\sum_{x_{r,m}\ge R/3}x_{r,m}^{-\alpha}=O_R(\hbar^{-1})$ and gives $O_R(\hbar^\infty)$ there. Dominated Riemann summation then proves full-norm convergence.

To track residues, the common action phase leaves the vector
$e^{-i\sigma\pi(r+1/2)/(2k)}(\mathbf v_\sigma)_r$.
Apply the Fourier matrix with entries $k^{-1/2}e^{-2\pi ilr/k}$ to this vector. The result is $e^{-i\sigma\pi/(4k)}F_{-\sigma}\mathbf v_\sigma$. Inverting the transform introduces factors $e^{2\pi ilr/k}$. These equal $e^{2\pi iln/k}$ on the sector $n=r+km$. Combining these with the linear action phase proves Eqs.~\eqref{eq:lattice-gaussian-branches} and \eqref{eq:finite-section-channel-matrix}. Choosing $\mathbf v=B_-^*e_j$ isolates the incoming ray, and Eq.~\eqref{eq:explicit-energy-gaussian} matches the Gaussian produced by Lemma~\ref{lem:compact-action-gaussian}. Any common scalar phase, possibly depending on $\hbar$, is absorbed in $f_\hbar$. This also proves Eq.~\eqref{eq:matched-source-coefficients}.
\end{proof}

\begin{lemma}[Husimi concentration of separated packets]
\label{lem:sampled-gaussian-husimi}
A normalized sampled action Gaussian
$\hbar^{1/4}e^{in\varphi}g((x_n-R)/\sqrt\hbar)$
has Husimi measure converging to the point of action $R$ and angle $\varphi$. For a finite sum of such packets at distinct points, the limiting measure is the sum of their point masses with weights equal to the squared moduli of their coefficients.
\end{lemma}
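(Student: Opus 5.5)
The plan is to compute the Husimi function directly from the Fock expansion. For a test function $\phi\in C_b(\mathbb R^2)$ we must show that $\int\phi\,d\mu_{u_\hbar,\hbar}\to\phi(\rho)$, where $\rho$ is the point with action $R$ and angle $\varphi$. Since every $\mu_{u,\hbar}$ is a probability measure for normalized $u$ (the coherent states resolve the identity), tightness plus convergence against compactly supported smooth $\phi$ is enough. We also get tightness essentially for free. The coherent-state projection onto Fock states concentrates near action $x_n$ up to an $O(\sqrt\hbar)$ spread, and $u_\hbar$ carries all its mass in $|x_n-R|\lesssim\sqrt\hbar$ up to $o(1)$. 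Hence the Husimi mass outside $\{|I-R|<\epsilon\}$ tends to zero.

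For concentration in angle, we write the Husimi density as $|\langle q,p|u\rangle|^2/(2\pi\hbar)$ with
\[
\langle q,p|u\rangle=e^{-|\xi|^2/2}\sum_n \bar\xi^{\,n}u_n/\sqrt{n!}.
\]
In action--angle form $\bar\xi^n=(I/\hbar)^{n/2}e^{-in\varphi'}$, so the sum becomes a Fourier-type series in $\varphi'$. Its coefficients are $e^{-I/(2\hbar)}(I/\hbar)^{n/2}/\sqrt{n!}$, which by Stirling are Gaussian in $x_n-I$ with width $\sqrt\hbar$, exactly as in Eq.~\eqref{eq:coherent-poisson-gaussian}. Multiplying by $u_n=\hbar^{1/4}e^{in\varphi}g(\cdot)$ produces the phase $e^{in(\varphi-\varphi')}$. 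The sum over $n$ is a Riemann sum on mesh $\hbar$ in rescaled units of width $\sqrt\hbar$, that is, about $\hbar^{-1/2}$ effective terms. It therefore behaves like $\hbar^{-1/4}\widehat{G}((\varphi'-\varphi)/\sqrt\hbar)$ for a Gaussian-type profile $G$, by the Poisson summation argument, since $n$ runs over all integers, with the aliases at $\varphi'-\varphi\in2\pi\mathbb Z$ identified modulo $2\pi$. The Husimi density in $(I,\varphi')$, using $dq\,dp=dI\,d\varphi'$, is then $\hbar^{-1}$ times an $L^1$-normalized profile in $((I-R)/\sqrt\hbar,(\varphi'-\varphi)/\sqrt\hbar)$ plus an $o(1)$ error in total variation. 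Rather than justify a pointwise asymptotic, the cleaner route is an $L^2$ one. Replace $u$ by a finitely supported approximation of the form $\hbar^{1/4}e^{in\varphi}g_0$ with $g_0\in C_c^\infty$. This is allowed because $u\mapsto\mu_u$ is continuous in total variation, with $\|\mu_u-\mu_w\|_{TV}\le\|u-w\|(\|u\|+\|w\|)$, which follows from the isometry $u\mapsto\langle\cdot|u\rangle/\sqrt{2\pi\hbar}$ into $L^2(dq\,dp)$. With a finitely supported $u$, the expectation $\int\phi\,d\mu$ equals $\langle u,\mathrm{Op}^{AW}(\phi)u\rangle$. For smooth $\phi$ the anti-Wick operator differs from $\phi$ evaluated at the packet center by $O(\sqrt\hbar)$ on such packets, because $u$ is microlocalized at $(R,\varphi)$ at scale $\sqrt\hbar$. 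The rotation $e^{-i\theta a^\dagger a}$ shifts $\varphi$ and commutes with the action, which reduces the check to a single base angle.

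For finite sums $\sum_j d_j u^{(j)}$ at distinct points $\rho_j$, tightness and the single-packet analysis localize the measure near $\{\rho_j\}$. Take a partition of unity $\phi=\sum_j\phi\chi_j$ with disjoint supports near each $\rho_j$. The cross terms are $\langle u^{(i)},\mathrm{Op}^{AW}(\phi\chi_j)u^{(j)}\rangle$ for $i\ne j$. They vanish as $\hbar\to0$ because at least one of the two packets is microlocally $O(\hbar^\infty)$ on the support of each $\chi$; concretely, its Husimi $L^2$ mass there tends to zero, and Cauchy--Schwarz in $L^2(dq\,dp)$ bounds each cross term. Distinct points with the same action differ in angle, and their angular profiles separate at scale $\sqrt\hbar$; distinct actions separate likewise. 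The diagonal terms then give $\sum_j|d_j|^2\phi(\rho_j)$.

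The main obstacle is the single-packet angular concentration, the Stirling/Poisson step, made uniform enough to give total-variation control. I would avoid a pointwise computation by bounding the Husimi second moments of $I-R$ and of $e^{i\varphi'}-e^{i\varphi}$. Both moments are $O(\hbar)$, computed through $\langle u,(\hbar a^\dagger a-R)^2u\rangle$ and $\|(a-\xi_\rho)u\|^2\hbar$. The latter follows from $a|n\rangle=\sqrt n|n-1\rangle$ and the smoothness of $g$, via a discrete derivative estimate on the sampled Gaussian. Chebyshev's inequality then gives concentration directly.
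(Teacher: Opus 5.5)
Your proposal is correct, and once the detours are removed it is the paper's proof. The paper also estimates $\|(\sqrt\hbar\,a_\hbar-w_0)u_\hbar\|$ by shifting $n$ by one and using the smoothness of $g$, converts this into a Husimi second moment to get concentration, passes to approximations via the total-variation stability of $u\mapsto\mu_{u,\hbar}$, and kills cross terms by Cauchy--Schwarz on disjoint neighborhoods. Your Stirling/Poisson and anti-Wick microlocalization steps, and the separate $I-R$ moment, are unnecessary: the identity $\int|w-w_0|^2\,d\mu_{u_\hbar,\hbar}=\|(\sqrt\hbar\,a_\hbar-w_0)u_\hbar\|^2+\hbar\|u_\hbar\|^2$ already gives concentration at the point of action $R$ and angle $\varphi$.
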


\begin{proof}
Write $w\equiv(q+ip)/\sqrt2$ and $w_0\equiv\sqrt R e^{i\varphi}$. Shifting $n$ by one in the sampled Gaussian and using its smooth rapidly decreasing envelope gives
$\|(\sqrt\hbar\,a_\hbar-w_0)u_\hbar\|\to0$.
The coherent-state resolution of the identity, also verified by integrating its number-basis expansion, gives
\begin{equation}
 \int|w-w_0|^2\,d\mu_{u_\hbar,\hbar}
 =\|(\sqrt\hbar\,a_\hbar-w_0)u_\hbar\|^2
                              +\hbar\|u_\hbar\|^2.
 \label{eq:husimi-second-moment}
\end{equation}
This proves concentration. For a bounded test function $b$, the same resolution and Cauchy--Schwarz give
\begin{equation}
 \left|\int b\,d\mu_{u,\hbar}-\int b\,d\mu_{v,\hbar}\right|
 \le\|b\|_\infty(\|u\|+\|v\|)\|u-v\|.
 \label{eq:husimi-norm-stability}
\end{equation}
Thus full-norm $o(1)$ approximations preserve the weak limit. For two distinct packet centers choose disjoint fixed neighborhoods. On either neighborhood the coherent-state transform of the other packet has norm $o(1)$; outside both neighborhoods both transforms have norm $o(1)$. Cauchy--Schwarz proves that every mixed Husimi integral tends to zero. This proves the statement for a finite sum.
\end{proof}

\begin{proof}[Proof of Theorem~\ref{thm:first-passage}]
For fixed $R$, the classical orbit from $\rho_0$ to $\rho_{j,R}^-$ remains in a compact action interval in $(0,\infty)$. Lemma~\ref{lem:compact-action-gaussian} approximates $\psi_\hbar(t_R^-)$ in full norm by a sampled Gaussian. Lemma~\ref{lem:leading-gaussian-matching} chooses $f$ and $\mathbf v=B_-^*e_j$ to match it by $\Psi_\hbar(t_R^-)$ to $o(1)$. Take, for example, $L=k$ in Lemma~\ref{lem:tail-packet-propagation}. Its error tends to zero. Unitarity gives
\begin{equation*}
 \|\psi_\hbar(t_R^+)-\Psi_\hbar(t_R^+)\|=o(1).
\end{equation*}
The outgoing full-norm decomposition is Eq.~\eqref{eq:lattice-gaussian-branches} with coefficients in Eq.~\eqref{eq:matched-source-coefficients}. Their common extra factor has modulus one. Lemma~\ref{lem:sampled-gaussian-husimi} and Eq.~\eqref{eq:husimi-norm-stability} prove Eq.~\eqref{eq:first-passage-limits}.

The outgoing envelopes are again Gaussians. For fixed $0<s<2T(c_0)$ choose $R$ such that $\tau_{c_0}(R)<s$. Each orbit segment from $\rho_{l,R}^+$ to $\rho_l(s)$ has action in a compact subset of $(0,\infty)$, including its finite turning point if present. Lemma~\ref{lem:compact-action-gaussian} propagates each component and Lemma~\ref{lem:sampled-gaussian-husimi} gives Eq.~\eqref{eq:finite-post-escape-limit}. The pre-escape assertion follows from the same compact-orbit construction.
\end{proof}

\begin{corollary}[Angular shell probabilities]
\label{cor:angular-shell-probabilities}
Let $0\le\omega_0\le1$ be smooth, supported in $(-2,2)$ and equal to one near zero. Normalize the angular Husimi measure by its mass in the shell weighted by $\omega_0(I-R)$. At both $t_R^\pm$ this denominator tends to one, hence is at least $1/2$ for sufficiently small $\hbar$. The conditional angular measures converge respectively to
\begin{equation*}
 \delta_{\varphi_{j,R}^-(c_0)},\qquad
 \sum_l|(S_{\mathrm q}(U))_{lj}|^2\delta_{\varphi_{l,R}^+(c_0)}.
\end{equation*}
Replacing finite-section angles by $\varphi_l^\pm$ incurs an $O_{c_0}(R^{-\alpha})$ error in bounded-Lipschitz distance. Thus the iterated limit $\hbar\downarrow0$, then $R\to\infty$, gives the asymptotic-ray probabilities $|(S_{\mathrm q}(U))_{lj}|^2$.
\end{corollary}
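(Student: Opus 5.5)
The plan is to reduce the corollary to Theorem~\ref{thm:first-passage}, namely the weak limits in Eq.~\eqref{eq:first-passage-limits}, and then control the ratio defining the conditional measures. I would write the angular measure in the shell as the pushforward of $\omega_0(I-R)\,d\mu_{\psi_\hbar(t_R^\pm),\hbar}$ under $(q,p)\mapsto\varphi$. For a continuous test function $h$ on the circle, the numerator is $\int h(\varphi)\,\omega_0(I-R)\,d\mu$ and the denominator is $\int\omega_0(I-R)\,d\mu$. Both integrands are bounded functions of $(q,p)$. They are continuous away from the origin, and $\varphi$ is discontinuous only at $q=p=0$; but $\omega_0(I-R)$ vanishes near the origin once $R>2$, so the products are bounded continuous functions on $\mathbb R^2$.

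The theorem's weak convergence then applies to both integrals. Every limit point mass lies at action exactly $R$, since $\rho^\sigma_{l,R}$ has $I=R$, and $\omega_0(0)=1$. Hence the denominator tends to the total mass $\sum_l|(S_{\rm q}(U))_{lj}|^2=1$ at $t_R^+$, by unitarity of $S_{\rm q}(U)$, and to $1$ at $t_R^-$. In particular it exceeds $1/2$ for small $\hbar$. The numerators tend to $h(\varphi^-_{j,R}(c_0))$ and to $\sum_l|(S_{\rm q}(U))_{lj}|^2h(\varphi^+_{l,R}(c_0))$, respectively. Dividing gives the claimed conditional limits. One subtlety is that the conditional measure is only defined once the denominator is positive, which the lower bound $1/2$ supplies.

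For the $R\to\infty$ step, Eq.~\eqref{eq:finite-section-angles} gives $|\varphi^\sigma_{l,R}-\varphi^\sigma_l|=|\theta(R,c_0)-\pi/2|/k$. Since $\theta=\arccos(c_0/(2\lambda R^\alpha))$ and $\arccos$ is Lipschitz near $0$, this difference is $O_{c_0}(R^{-\alpha})$. Moving each atom of a probability measure by at most $\varepsilon$ changes bounded-Lipschitz integrals by at most $\varepsilon$, and the weights are unchanged. This yields the stated error bound, and letting $R\to\infty$ after $\hbar\downarrow0$ gives the ray probabilities $|(S_{\rm q}(U))_{lj}|^2$.

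The only delicate point is the continuity check of the test function near the origin and the branch cut of $\varphi$. It is resolved by the support of $\omega_0(I-R)$, which keeps the integrand away from $I=0$, together with using test functions $h$ on the circle, which are continuous in $\varphi$ modulo $2\pi$. All other steps are direct consequences of Theorem~\ref{thm:first-passage}.
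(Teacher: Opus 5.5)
Your proposal is correct and takes the same route as the paper. Like the paper, you apply the weak limits of Theorem~\ref{thm:first-passage} to the shell weight $\omega_0(I-R)$ and to its product with an angular test function, use that all limiting atoms lie at $I=R$ with total mass one, and obtain the $O_{c_0}(R^{-\alpha})$ angle shift from Eq.~\eqref{eq:finite-section-angles} together with $\arccos z=\pi/2+O(z)$. You also spell out details the paper leaves implicit: the continuity of the test functions, since $\omega_0(I-R)$ vanishes near the origin for $R>2$, and the bounded-Lipschitz bound obtained by moving atoms.
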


\begin{proof}
Apply Eq.~\eqref{eq:first-passage-limits} to the shell weight and to its product with a continuous angular test function. The limit has mass one on $I=R$. The angular estimate follows from Eq.~\eqref{eq:finite-section-angles} and $\arccos z=\pi/2+O(z)$ at zero.
\end{proof}

\begin{corollary}[Coherent returns and the scope of stochastic gluing]
\label{cor:coherent-returns}
In the setting of Theorem~\ref{thm:first-passage}, assume $c_0>0$. For every fixed integer $\nu\ge1$ and $0<s<2T(c_0)$, the full Husimi measures satisfy
\begin{equation}
 \mu_{\psi_\hbar(t_*+2(\nu-1)T(c_0)+s),\hbar}
 \rightharpoonup
 \sum_{l=0}^{k-1}\left|\bigl(S_{\mathrm q}(U)^\nu e_j\bigr)_l\right|^2
 \delta_{\rho_l(s)}
 \quad(\hbar\downarrow0).
 \label{eq:coherent-return-weights}
\end{equation}
Here $\rho_l(s)$ is the classical branch in Theorem~\ref{thm:first-passage}. In general the limiting source-channel weights differ from $\mathsf P(U)^\nu e_j$, with $\mathsf P(U)$ the one-pass probability matrix defined in Sec.~\ref{sec:WKB-correspondence}.
\end{corollary}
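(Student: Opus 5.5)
We argue by induction on $\nu$, strengthening the statement so that it tracks amplitudes rather than weights. The strengthened claim is as follows. At time $t_*+2(\nu-1)T(c_0)+s$, the state $\psi_\hbar$ is approximated in full norm, to $o(1)$, by a superposition of $k$ propagated Gaussian packets, one on each branch $\rho_l(s)$, with complex coefficients equal to $(S_{\mathrm q}(U)^\nu e_j)_l$ up to a common unimodular factor. This factor may depend on $\hbar$ and $\nu$ but not on $l$. The case $\nu=1$ is exactly the amplitude-level content of the proof of Theorem~\ref{thm:first-passage}: the outgoing decomposition \eqref{eq:lattice-gaussian-branches} has coefficients \eqref{eq:matched-source-coefficients}, and each branch is propagated by Lemma~\ref{lem:compact-action-gaussian}. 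Lemma~\ref{lem:sampled-gaussian-husimi} and \eqref{eq:husimi-norm-stability} then turn the amplitude statement into \eqref{eq:coherent-return-weights}.

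The geometric input for positive energy is that the component with source ray $l$ lies between $\varphi_l^-$ and $\varphi_l^+$. A packet entering at source $l$ therefore leaves at sink $l$ after time $2T(c_0)$, so channel labels are preserved classically between passages. To go from $\nu$ to $\nu+1$, take the $k$ packets at $s$ slightly less than $2T(c_0)$, arriving near $\rho^-_{l,R}$. We need the time-shifted incoming approximation to use one common energy profile $f$ for all channels. The incoming widths $Z$ on the $k$ rays coincide by the $\mathbb Z_k$ rotation symmetry $n\mapsto n$, $\varphi\mapsto\varphi+2\pi/k$. This symmetry maps the Hamiltonian to itself and maps channel $l$ to $l+1$, and the Gaussian transport \eqref{eq:action-gaussian-transport} is rotation-invariant. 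Hence a single $f$ from Lemma~\ref{lem:leading-gaussian-matching} matches all of them simultaneously with $\mathbf v=B_-^*\mathbf a$, where $\mathbf a$ is the incoming coefficient vector. Because the tail packet \eqref{eq:approximate-tail-packet} is linear in $\mathbf v$, Lemma~\ref{lem:tail-packet-propagation} propagates it through infinity with outgoing vector $B_+UB_-^*\mathbf a=e^{-i\pi/(2k)}S_{\mathrm q}(U)\mathbf a$.

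The main obstacle is phase bookkeeping. The relative phases between channels must be preserved by both the compact-orbit evolution and the matching, up to an $l$-independent factor. If Lemma~\ref{lem:compact-action-gaussian} introduced $l$-dependent phases through $\mathcal A(t)$ or $\mathcal N(t)$, the $\nu$-fold law would become $(DS_{\mathrm q})^\nu$ rather than $S_{\mathrm q}^\nu$. We handle this with the same rotation symmetry. The classical action $\mathcal A$ accumulated from $\rho^+_{l,R}$ to $\rho^-_{l,R}$ is independent of $l$, and the lattice phase $e^{in\varphi}$ with $\varphi$ shifted by $2\pi l/k$ is exactly the character already absorbed in $F_\pm$. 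The lift of $\varphi$ must also be chosen consistently with \eqref{eq:finite-section-angles}: the transit adds $-2\theta(R,c_0)/k$ to the angle, uniformly in $l$. The residue-dependent prefactors $e^{-i\sigma\pi(r+1/2)/(2k)}$ are included in $B_\pm$, so they act correctly at each passage. Once this is checked, iteration gives coefficients $e^{-i\nu\pi/(2k)}S_{\mathrm q}(U)^\nu e_j$. The errors accumulate only finitely many times, since $\nu$ is fixed, so the total remains $o(1)$.

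Finally, for the last claim we need a counterexample showing that $|S^\nu e_j|^2\ne\mathsf P^\nu e_j$ in general. Take $k=3$ and choose $U$ so that $S_{\mathrm q}(U)=F_0$, the discrete Fourier matrix, which is allowed because $U=F_-^*F_0F_+$. Then $\mathsf P$ is the uniform matrix $\tfrac13\mathbf J$, so $\mathsf P^2e_0$ is uniform. On the other hand, $F_0^2$ is a permutation matrix, so $|F_0^2e_0|^2=e_{-0}=e_0$ is a point mass. The two therefore differ, and the coherent weights are not those of the stochastic gluing.
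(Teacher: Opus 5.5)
Your proposal is correct and follows the paper's proof: induction over passages, with the $2\pi/k$ rotation covariance of the bulk Jacobi expression and of the Gaussian transport giving a common profile and $l$-independent propagation phases. Linearity of Lemmas~\ref{lem:tail-packet-propagation} and \ref{lem:leading-gaussian-matching} in the input vector then multiplies it by $S_{\mathrm q}(U)$ up to a common scalar phase, and Lemma~\ref{lem:sampled-gaussian-husimi} yields the weights, while the discrete Fourier matrix, whose square is a permutation, serves as the counterexample to $\mathsf P(U)^\nu$; your check that the accumulated action is $l$-independent because each transit starts and ends at action $R$ just spells out the covariance the paper asserts.
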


\begin{proof}
For positive energy, source ray $l$ connects through its finite sector to sink ray $l$. The $k$ sectors are rotations by $2\pi/k$ of each other. In the sampled action representation this rotation is multiplication by $e^{2\pi iln/k}$. The bulk Jacobi expression commutes with it, and the Gaussian construction in Lemma~\ref{lem:compact-action-gaussian} is covariant under this multiplication. The propagation phases are therefore common in these packet frames. All branches return after the same time $2T(c_0)$ with one common Gaussian profile and their relative channel amplitudes preserved. Lemmas~\ref{lem:tail-packet-propagation} and \ref{lem:leading-gaussian-matching} accept any input vector with that common profile and multiply it by $S_{\mathrm q}(U)$ up to a common scalar phase. Induction gives the finite Gaussian decomposition after any fixed number of passages, and Lemma~\ref{lem:sampled-gaussian-husimi} proves Eq.~\eqref{eq:coherent-return-weights}.

For example, choose $S_{\mathrm q}(U)$ to be the normalized discrete Fourier matrix; it is realized by $U=F_-^*S_{\mathrm q}(U)F_+$. Its first column has equal squared moduli, whereas its square is a permutation. After two passages the limiting quantum measure has one outgoing branch; the twice-iterated uniform Markov matrix still assigns equal weights. Thus repeated Markov evolution requires an incoherent input at each passage. Spatial separation alone does not remove the relative phases.
\end{proof}

For Husimi limits and wave-packet revivals under free motion on a torus, see Ref.~\cite{Trushechkin2017}. The returns in Corollary~\ref{cor:coherent-returns} arise from passages through infinity, with amplitudes determined by the extension.

\begin{proposition}[Liouville amplitude representation of the Husimi limit]
\label{prop:liouville-amplitude-limit}
Let $U$ and $\rho_0$ satisfy the hypotheses of Theorem~\ref{thm:first-passage}, and put $S_{\mathrm q}\equiv S_{\mathrm q}(U)$. In the source-based component enumeration of Sec.~\ref{sec:classical-flow}, define
\begin{equation}
 Q_c\equiv
 \begin{cases}
 \mathbf 1_k,&c>0,\\
 P_{+1},&c<0,
 \end{cases}
 \qquad
 (P_{+1}z)_l\equiv z_{l-1},\qquad
 J_U(c)\equiv Q_c^*S_{\mathrm q}^*,
 \label{eq:liouville-quantum-boundary}
\end{equation}
where indices are taken modulo $k$. Let $\widetilde L_{J_U}$ be the energy-fiber preserving skew-self-adjoint extension in Proposition~\ref{prop:energy-fiber-extensions}, with domain condition $\tau_+\chi=J_U(c)\tau_-\chi$, and set
\begin{equation}
 V_U(t)\equiv e^{-t\widetilde L_{J_U}}.
 \label{eq:liouville-amplitude-group}
\end{equation}
This group preserves the norm and the energy fibers. Its forward transport and boundary scattering are
\begin{equation}
 \partial_t\chi+\partial_s\chi=0,\qquad
 \chi^{\mathrm{snk}}\equiv Q_c\tau_+\chi,\qquad
 \chi^{\mathrm{src}}\equiv\tau_-\chi=S_{\mathrm q}\chi^{\mathrm{snk}}.
 \label{eq:liouville-amplitude-scattering}
\end{equation}

Write $\rho_0$ as $(\ell_0,c_0,s_0)$ in these coordinates, with $\ell_0$ the source label of its component. Choose any $g\in C_c^\infty(\mathbb R^2)$ with $\|g\|_{L^2}=1$. For sufficiently small $\varepsilon>0$, the normalized initial amplitude
\begin{equation}
 \chi_\varepsilon(c,s)\equiv\varepsilon^{-1}
 g\left(\frac{c-c_0}{\varepsilon},
        \frac{s-s_0}{\varepsilon}\right)e_{\ell_0}
 \label{eq:localized-liouville-amplitude}
\end{equation}
is supported inside the initial component. Identify $V_U(t)\chi_\varepsilon$ with a function on the original phase plane by Proposition~\ref{prop:energy-fiber-extensions}, and define the probability measure
\begin{equation*}
 d\mu^{\mathrm L}_{\varepsilon,t}
 \equiv|V_U(t)\chi_\varepsilon|^2\,dq\,dp.
\end{equation*}
For every $b\in C_b(\mathbb R^2)$ and every fixed $0\le t<t_*+2T(c_0)$ with $t\ne t_*$,
\begin{equation}
 \lim_{\varepsilon\downarrow0}\int b\,d\mu^{\mathrm L}_{\varepsilon,t}
 =
 \lim_{\hbar\downarrow0}\int b\,d\mu_{\psi_\hbar(t),\hbar}.
 \label{eq:liouville-husimi-limit}
\end{equation}
If $c_0>0$, Eq.~\eqref{eq:liouville-husimi-limit} holds at every fixed $t\ge0$ except the escape times $t_*+2mT(c_0)$, $m=0,1,\ldots$. In particular, at the times in Corollary~\ref{cor:coherent-returns}, both limits have branch weights $|(S_{\mathrm q}^{\nu}e_j)_l|^2$.
\end{proposition}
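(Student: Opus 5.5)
The proof splits into two parts: the structural claims about $V_U(t)$, then a direct computation of the Liouville measure, which is compared with the limits already obtained in Theorem~\ref{thm:first-passage} and Corollary~\ref{cor:coherent-returns}.

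\emph{Structural claims.} Since $Q_c$ and $S_{\mathrm q}$ are unitary, $J_U(c)=Q_c^*S_{\mathrm q}^*$ is a measurable unitary family; it is piecewise constant in the sign of $c$. Proposition~\ref{prop:energy-fiber-extensions} then shows that $\widetilde L_{J_U}$ is skew-self-adjoint and commutes with multiplication by $c$. Hence $V_U(t)$ is unitary and preserves energy fibers. On each fiber the group is unit-speed translation in $s$, so smooth solutions satisfy $\partial_t\chi+\partial_s\chi=0$. The domain condition must be read carefully. Mass leaves a component at its sink end $s=+T(c)$ and re-enters at its source end $s=-T(c)$. For the outward flux to be reinjected, the condition has to be read as $\tau_-\chi=J_U(c)^{*}\tau_+\chi$ in the transport direction. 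Equivalently, $\tau_-\chi=S_{\mathrm q}Q_c\tau_+\chi$, which is Eq.~\eqref{eq:liouville-amplitude-scattering}.

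I will check this orientation against the convention $\tau_\pm\chi(c)=(\chi_l(c,\pm T))_l$ together with the forward transport equation; that check is where the inverse in $J_U=Q_c^*S_{\mathrm q}^*$ enters. The permutation $Q_c$ relabels sink ends by the source labels of Sec.~\ref{sec:classical-flow}. For $c>0$, component $l$ has its source on $\varphi_l^+$ and its sink on $\varphi_l^-$, so no relabeling is needed. For $c<0$, component $l$ runs from $\varphi_l^+$ to $\varphi_{l+1}^-$, so the sink of ray $l$ belongs to component $l-1$; this gives $P_{+1}$.

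\emph{Explicit Liouville evolution.} Take the localized amplitude $\chi_\varepsilon$. While $s_0+t+\varepsilon\cdot\mathrm{diam}<T(c)$ on its support, it is simply translated: $V_U(t)\chi_\varepsilon(c,s)=\chi_\varepsilon(c,s-t)$. The coordinate $s_0$ of $\rho_0$ satisfies $T(c_0)-s_0=t_*$, and the sink reached is ray $j$. In source-based labels, $(Q_{c_0}e_{\ell_0})=e_j$, since the component of $\rho_0$ ends at ray $j$.

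For $t\neq t_*$ and $\varepsilon$ small, the support has either not reached the sink or has completely crossed it. Unit-speed transport plus the boundary rule then gives
\begin{equation*}
 V_U(t)\chi_\varepsilon(c,s)=\varepsilon^{-1}g\Bigl(\frac{c-c_0}{\varepsilon},\frac{s+T(c)-(t-t_*)}{\varepsilon}\Bigr)S_{\mathrm q}e_j
\end{equation*}
for $t_*<t<t_*+2T(c_0)$, up to the small $c$-dependence of $T(c)$ on the $\varepsilon$-support. That dependence is harmless because $T$ is smooth near $c_0\neq0$.

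The measure $|V_U(t)\chi_\varepsilon|^2dc\,ds$ has component weights $|(S_{\mathrm q}e_j)_l|^2$, each concentrated near $(c_0,s)$ with $s=-T(c_0)+(t-t_*)$. By the unit Jacobian $dq\,dp=dc\,ds$ and the continuity of the inverse coordinates $(q_l,p_l)$, pairing with $b\in C_b$ converges to $\sum_l|(S_{\mathrm q}e_j)_l|^2b(\rho_l(t-t_*))$. This equals the right-hand side of Eq.~\eqref{eq:finite-post-escape-limit}. For $t<t_*$ the same argument gives $b$ evaluated on the classical trajectory, which is the pre-escape assertion of Theorem~\ref{thm:first-passage}. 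Hence Eq.~\eqref{eq:liouville-husimi-limit} holds on the stated time range.

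\emph{Case $c_0>0$.} Here $Q_c=\mathbf 1_k$ and every component has the same travel time $2T(c_0)$, so all branches reach their sinks simultaneously. Each further passage therefore multiplies the channel vector by $S_{\mathrm q}$. By induction, at $t=t_*+2(\nu-1)T(c_0)+s$ the vector is $S_{\mathrm q}^\nu e_j$. This matches Corollary~\ref{cor:coherent-returns} at every non-escape time.

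\emph{Main obstacle.} The delicate point is not analytic but the bookkeeping of conventions. One must show that the boundary relation $\tau_+=J_U\tau_-$ with $J_U=Q_c^*S_{\mathrm q}^*$ really produces the forward scattering $\chi^{\rm src}=S_{\mathrm q}\chi^{\rm snk}$, and that the labeling $Q_c$ is consistent. Concretely, the sink ray index $j$ used in Theorem~\ref{thm:first-passage} (the column index of $S_{\mathrm q}$, via $F_+$) must coincide with the geometric sink ray of the orbit. Likewise, the source index $l$ must label the outgoing component whose source is $\varphi_l^+$.

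I would settle this first by writing the boundary form from the proof of Proposition~\ref{prop:energy-fiber-extensions}: unitarity of $V_U$ requires $\|\tau_+\chi\|=\|\tau_-\chi\|$ fiberwise, with data flowing out at $+T$ and in at $-T$. I would then fix the direction of $J_U$ by testing on a translated bump crossing the boundary. Afterwards the remaining estimates are the routine concentration statements above.
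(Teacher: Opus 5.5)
Your proposal is correct and follows essentially the same route as the paper: invert the domain condition to get $\tau_-\chi=J_U(c)^*\tau_+\chi=S_{\mathrm q}Q_c\tau_+\chi$, solve the unit-speed transport along characteristics while applying this rule at each passage (using $Q_{c_0}e_{\ell_0}=e_j$, and $\ell_0=j$ for $c_0>0$), and compare the resulting point masses with Theorem~\ref{thm:first-passage} and Corollary~\ref{cor:coherent-returns}. One precision: the exact post-passage profile is $\varepsilon^{-1}g\left((c-c_0)/\varepsilon,(s-t+2\nu T(c)-s_0)/\varepsilon\right)$, which differs from your displayed formula by a measure-preserving shear that is $O(1)$ in the rescaled variables but only $O(\varepsilon)$ in $(c,s)$, and that is why it leaves the weak limit unchanged.
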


\begin{proof}
A component with index $l$ begins at source ray $l$. Its sink ray is $l$ for positive energy and $l+1$ for negative energy, by the sector geometry in Sec.~\ref{sec:WKB-correspondence}. Hence $Q_c$ converts the component endpoint vector $\tau_+\chi$ to the geometrically labeled sink amplitudes. Eq.~\eqref{eq:liouville-quantum-boundary} gives a measurable unitary family. Proposition~\ref{prop:energy-fiber-extensions} therefore supplies the claimed skew-self-adjoint operator and its unitary group, in the boundary-extension framework of Ref.~\cite[Theorem~19]{Sakbaev2025}. Commutation with multiplication by $c$ follows from its fiberwise construction.

The sign in Eq.~\eqref{eq:liouville-amplitude-group} makes the physical transport advance toward increasing $s$: before meeting an endpoint it is $\chi(c,s,t)=\chi(c,s-t,0)$. Inverting the domain condition gives
\begin{equation*}
 \tau_-\chi=J_U(c)^*\tau_+\chi
          =S_{\mathrm q}Q_c\tau_+\chi,
\end{equation*}
which proves Eq.~\eqref{eq:liouville-amplitude-scattering}. Thus the forward component-to-component scattering matrix is $S_{\mathrm q}Q_c$, while its sink-to-source expression in geometric ray labels is $S_{\mathrm q}$.

The initial support in Eq.~\eqref{eq:localized-liouville-amplitude} stays away from $c=0$ and from the component endpoints when $\varepsilon$ is small. Its norm is one by the change of variables $c=c_0+\varepsilon u$, $s=s_0+\varepsilon v$ and the measure identity $dq\,dp=dc\,ds$. The central first escape time is $t_*=T(c_0)-s_0$. At any observation time allowed in the statement, the central trajectory is separated from its escape times. Continuity of $T(c)$ implies that every characteristic issued from the shrinking support has the same number $\nu$ of endpoint passages as the central one, for all sufficiently small $\varepsilon$. Here $\nu=0$ before the first escape and $\nu=1$ in the stated post-escape interval; for $c_0>0$ any fixed $\nu$ is allowed.

Solving the transport equation along characteristics and applying the boundary condition at each passage gives, on the propagated support,
\begin{equation}
 (V_U(t)\chi_\varepsilon)_l(c,s)
 =
 \bigl[(S_{\mathrm q}Q_{c_0})^{\nu}e_{\ell_0}\bigr]_l
 \varepsilon^{-1}
 g\left(\frac{c-c_0}{\varepsilon},
        \frac{s-t+2\nu T(c)-s_0}{\varepsilon}\right).
 \label{eq:liouville-characteristic-packet}
\end{equation}
The amplitude vanishes outside that support. All components have the same interval length $2T(c)$, so the scalar profile in Eq.~\eqref{eq:liouville-characteristic-packet} is common to all branches. The translation $(c,s)\mapsto(c,s-t+2\nu T(c))$ preserves $dc\,ds$, and the coefficient vector has norm one.

For $\nu=0$ the measure therefore concentrates at the ordinary classical position at time $t$. For $\nu=1$, the initial component terminates on sink ray $j$, so $Q_{c_0}e_{\ell_0}=e_j$. Its outgoing vector is $S_{\mathrm q}e_j$, and each scalar profile concentrates at the corresponding point $\rho_l(t-t_*)$. For positive energy $Q_{c_0}=\mathbf 1_k$ and $\ell_0=j$, so after $\nu$ passages the vector is $S_{\mathrm q}^{\nu}e_j$. The centers are exactly those in Corollary~\ref{cor:coherent-returns}. Weak convergence follows by changing variables in each compactly supported profile and applying continuity of $b$ at its limiting center. Theorem~\ref{thm:first-passage} and Corollary~\ref{cor:coherent-returns} give the same limits for the quantum Husimi measures, proving Eq.~\eqref{eq:liouville-husimi-limit}.
\end{proof}

\section{Numerical illustration}
\label{sec:numerical-illustration}

We take $k=3$, $\lambda=1$, $\hbar=0.0025$, and an initial coherent state with $I_0=1$, $\varphi_0=-\pi/12$, center $(q_0,p_0)=(1.3660254,-0.3660254)$ and positive energy $c_0=\sqrt2$. Summing the coherent-state overlap gives its initial Husimi density
\begin{equation}
 Q_{\hbar,0}(q,p)\equiv\frac1{2\pi\hbar}
 \exp\left(-\frac{(q-q_0)^2+(p-p_0)^2}{2\hbar}\right).
 \label{eq:numerical-initial-density}
\end{equation}
We compare the quantum Husimi distribution with a classical ensemble sampled from this Gaussian, rather than a single trajectory; see Ref.~\cite{TrushechkinVolovich2009} for the distribution-based viewpoint. The ensemble follows the classical trajectories, retaining channel amplitudes as in Proposition~\ref{prop:liouville-amplitude-limit}. A separate comparison uses the Markov rule $\mathsf P(U)$, which discards relative phases.

We use Fock--Galerkin compressions with even dimension in every residue sector. Their unitary groups converge strongly at fixed $\hbar$~\cite[Proposition~IV.1 and Theorem~IV.2]{Ashhab2026}, selecting $U=-\mathbf 1_k$ in our flux convention; odd sector dimensions select $U=+\mathbf 1_k$ (Proposition~\ref{prop:quantized-deterministic-gluings}). The figures use $N=51200$ states per sector, cutoff action $3\hbar N=384$, and $U=-\mathbf 1_3$.

This choice produces coherent splitting and recombination. Substitution into Eq.~\eqref{eq:channel-boundary-condition}, using the Fourier factorization in the proof of Proposition~\ref{prop:quantized-deterministic-gluings}, gives
\begin{equation}
 \begin{gathered}
 S_{\mathrm q}(U)=-F_0^*
 \operatorname{diag}(e^{-i\pi r/3})_{r=0}^{2}F_0,\\
 S_{\mathrm q}(U)^2=P_{+1}^*,\qquad
 S_{\mathrm q}(U)^6=\mathbf 1_3.
 \end{gathered}
 \label{eq:cubic-channel-cycle}
\end{equation}
Eq.~\eqref{eq:cubic-channel-cycle} follows by squaring the diagonal phases, which gives the inverse of the shift in Eq.~\eqref{eq:liouville-quantum-boundary}. By Corollary~\ref{cor:coherent-returns}, after $2m$ passages the limiting packet occupies sector $-m\pmod3$ alone; after an odd number it occupies all three, with cyclic permutations of the weights $(4/9,1/9,4/9)$. This pattern repeats after six passages, or time $12T(c_0)$ at a fixed offset from escape. Exact periodicity of the finite-$\hbar$ truncated evolution is not asserted. The choice $U=+\mathbf 1_3$ gives the same limiting probabilities, since its ray matrix differs only by an overall sign.

Fig.~\ref{fig:husimi-liouville} shows the distributions at identical physical times $t=0.5486913$ and $t=1.4573807$, after the first and second escapes of the central trajectory, without fitted alignment. The first passage populates three sectors, whereas the second recombines the limiting packet into sector $2$. In Fig.~\ref{fig:channel-probabilities}, the first-pass weights are $(4/9,1/9,4/9)$ for both coherent and Markov propagation; after the second passage they are $(0,0,1)$ and $(8/27,8/27,11/27)$, respectively. Probabilities include the full radial range in angular sectors centered at $2\pi l/3$ with half-width $\pi/3$.

Spatial discrepancies include a radial cutoff dependence (Table~\ref{tab:radial-cutoff-test}); classical transport also need not reproduce finite-$\hbar$ Husimi widths. No plotting-window renormalization is applied: the second classical panel contains about $86.8\%$ of the mass. Residual classical channel weights reflect dispersion of passage times. The figures illustrate splitting and recombination, not a full spatial-error bound.

\begin{table}[htbp]
 \centering
 \caption{Median action of the full quantum Husimi measure for even sector dimension $N$, at $\hbar=0.0025$ and $t=0.5486913344$. The central classical action is $2.7870372941$, from Eq.~\eqref{eq:HamEqs}}
 \label{tab:radial-cutoff-test}
 \begin{tabular}{rrr}
  \hline
  Sites per sector $N$ & Cutoff action $3\hbar N$ & Median action\\
  \hline
  $3200$ & $24$ & $1.0628911$\\
  $12800$ & $96$ & $1.5799713$\\
  $51200$ & $384$ & $2.0496921$\\
  \hline
 \end{tabular}
\end{table}

\begin{figure}[tbp]
 \centering
 \includegraphics[width=\textwidth]{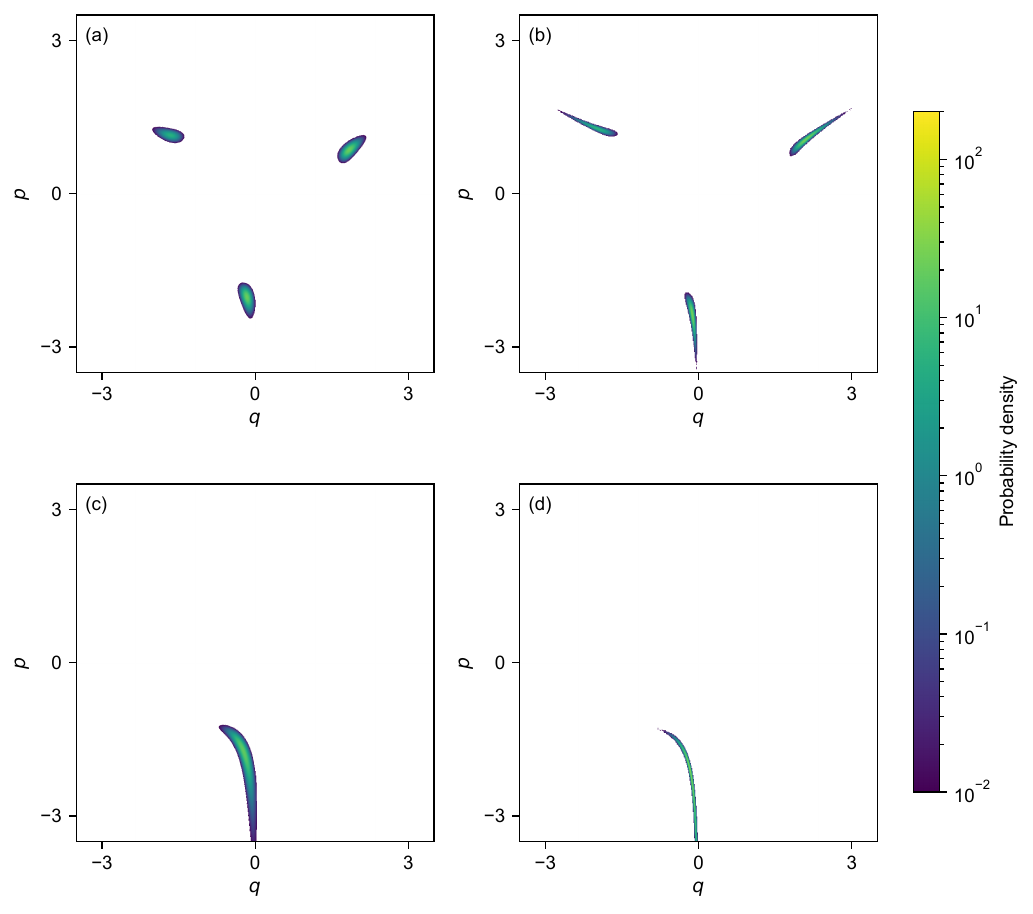}
 \caption{Quantum Husimi density (a, c) and classical transport with coherent Liouville amplitudes (b, d), for Eq.~\eqref{eq:numerical-initial-density}. The times are $0.5486913$ (a, b) and $1.4573807$ (c, d); $k=3$, $\hbar=0.0025$, $U=-\mathbf 1_3$, cutoff action $384$. All panels have the same logarithmic scale, blank below $0.01$, without window renormalization. Table~\ref{tab:radial-cutoff-test} illustrates the radial cutoff dependence}
 \label{fig:husimi-liouville}
\end{figure}

\begin{figure}[tbp]
 \centering
 \includegraphics[width=\textwidth]{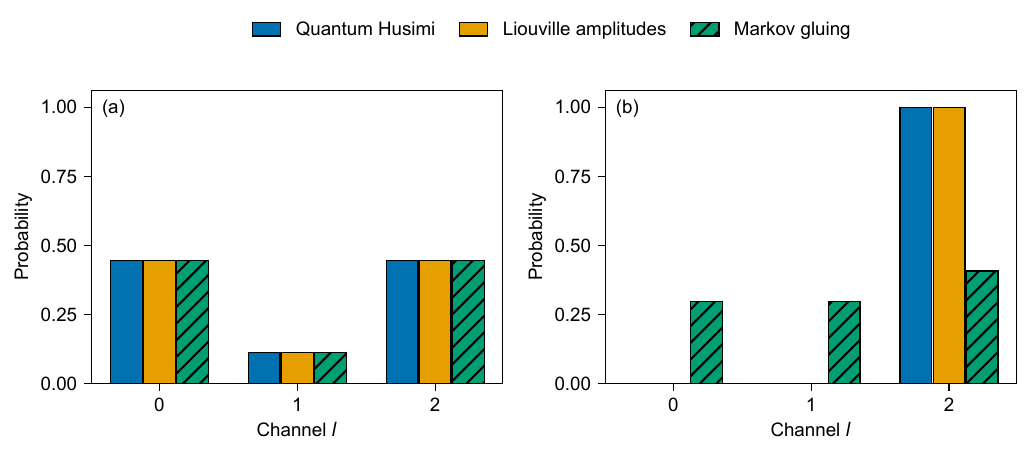}
 \caption{Full-angular-sector probabilities at the first (a) and second (b) times of Fig.~\ref{fig:husimi-liouville}, with unrestricted radial integration. Quantum evolution and coherent Liouville transport are compared with Markov gluing. The limiting first-pass weights coincide; after the second passage coherent weights are $(0,0,1)$, while Markov weights are $(8/27,8/27,11/27)$}
 \label{fig:channel-probabilities}
\end{figure}

\section*{Conclusion}

Higher-order squeezing makes the need for a boundary rule visible in both descriptions. The classical orbit escapes in finite time, while the quantum expression has multiple self-adjoint realizations. The ray matrix links the two choices: monomial matrices give deterministic gluings, and general unitary matrices split a localized packet into classical branches.

After the first passage through infinity, the Husimi measure concentrates on the outgoing classical branches with weights $|(S_{\mathrm q}(U))_{lj}|^2$. At positive energy, Corollary~\ref{cor:coherent-returns} also follows later returns: branches in distinct classical sectors retain relative phases that affect the outgoing probabilities. Spatial separation alone does not turn their superposition into an incoherent mixture. The limiting Husimi measure records the branch weights, not those phases, so a one-pass stochastic rule does not describe all subsequent evolution.

Complex Liouville amplitudes retain the information needed for these returns within the extension framework of Refs.~\cite[Remark~6 and Theorems~19--20]{Sakbaev2025} and~\cite[Remark~2 and Sec.~5]{SakbaevLob2026}. This permits a hybrid quantum--classical interpretation: classical transport in $(c,s)$ accompanied by coherent finite-dimensional channel dynamics. For related wavefunction formulations of hybrid dynamics, see Ref.~\cite[Sec.~3]{BondarGayBalmazTronci2019}. Here channels label classical branches, not an independent quantum subsystem. Our results concern Husimi concentration limits, not convergence on a full hybrid observable algebra or unitary equivalence of the full quantum and Liouville groups.

\section*{Acknowledgments}
The author is grateful to V.~Zh.~Sakbaev for many fruitful discussions, careful reading of intermediate versions of the manuscript, and valuable comments and ideas that contributed to the formulation of the problem and the development of this work.
The author also thanks R.~Singh, A.~V.~Utkin, and V.~I.~Yashin for fruitful discussions of the questions considered in this work.

\end{document}